\makeatletter\@addtoreset {equation}{section}\makeatother
\newtheorem{corollary}{Corollary}
\newtheorem{lemma}{Lemma}
\newtheorem{remark}{Remark}
\begin{document}

\title[Standing periodic waves in the Ablowitz-Ladik equation]{\bf Rogue waves arising on the standing periodic waves \\ in the Ablowitz-Ladik equation}

\author{Jinbing Chen}
\address[J. Chen]{School of Mathematics, Southeast University, Nanjing, Jiangsu 210096, P.R. China}
\email{cjb@seu.edu.cn}
	
\author{Dmitry E. Pelinovsky}
\address[D.E. Pelinovsky, Corresponding author]{
	Department of Mathematics, McMaster University, Hamilton, Ontario, Canada, L8S 4K1 }
\email{dmpeli@math.mcmaster.ca}

\keywords{Ablowitz-Ladik equation, standing periodic waves, Lax spectrum, stability spectrum, rogue waves}

\begin{abstract}
	We study the standing periodic waves in the semi-discrete integrable system modelled by the Ablowitz--Ladik equation. We have related the stability spectrum to the Lax spectrum by separating the variables and by finding the characteristic polynomial for the standing periodic waves. We have also obtained rogue waves on the background of the modulationally unstable standing periodic waves by using the end points of spectral bands and the corresponding eigenfunctions. The magnification factors for the rogue waves have been computed analytically and compared with their continuous counterparts. The main novelty of this work is that we explore a non-standard linear Lax system, which is different from the standard Lax representation of the Ablowitz--Ladik equation.
\end{abstract}

\date{\today}
\maketitle

{\em This article is dedicated to Athanassios S. Fokas on the occasion of his 70th anniversary for his many contributions to studies of integrable nonlinear PDEs and boundary-value problems.}

\section{Introduction}
\label{sec:1}

The nonlinear Schr\"{o}dinger (NLS) equation models wave dynamics in many physical problems related to fluids, plasmas, and optics \cite{Fibich-book,Sulem-book}. Complicated wave patterns can be expressed analytically by using exact solutions of the NLS equation for periodic and double-periodic standing waves (see review in \cite{Pel-Front}). The standing periodic waves are known to be modulationally unstable \cite{DS,DU} and rogue waves (localized perturbations in space and time) have been observed on their backgrounds in numerical experiments \cite{AZ,Kedrosa}. Rogue waves are generated due to modulational instability of the wave background \cite{Suret,Amin}. The exact solutions for rogue waves arising on the periodic 
stading waves were obtained analytically \cite{CPnls,CPW,Feng} 
and confirmed experimentally for fluids and optics \cite{XuKibler}.

{\em It is natural to ask if the modulational instability and rogue waves persist on the standing periodic waves in the integrable discretizations of the integrable NLS equation.} This question has received much less attention 
in the literature. The main purpose of our work is to answer this question  
for the Ablowitz-Ladik (AL) equation \cite{AL-1976} written 
in the normalized form:
\begin{equation}
i \dot{u}_n + (1 + |u_n|^2) (u_{n+1} + u_{n-1}) = 0,
\label{al}
\end{equation}
where the dot represents the derivative of $\{ u_n(t) \}_{n \in \mathbb{Z}} \in \mathbb{C}^{\mathbb{Z}}$
with respect to the time variable $t \in \mathbb{R}$, and $i = \sqrt{-1}$.
The continuum limit of the AL equation (\ref{al}) is obtained for slowly varying wave packets of small amplitude, the leading order of which can be represented as
\begin{equation}
\label{exp-continuum}
u_n(t) = \varepsilon \mathfrak{u}(\varepsilon n, \varepsilon^2 t) e^{2it},
\end{equation}
where $\varepsilon > 0$ is a formal small parameter. Substituting (\ref{exp-continuum}) into (\ref{al}) and expanding
$\mathfrak{u}(X\pm \varepsilon,T)$ with $X = \varepsilon n$ and $T = \varepsilon^2 t$ in the Taylor series in $\varepsilon$ yield
at the formal order of $\mathcal{O}(\varepsilon^3)$ the continuous NLS equation 
in the form:
\begin{equation}
i \mathfrak{u}_T + \mathfrak{u}_{XX} + 2 |\mathfrak{u}|^2 \mathfrak{u} = 0.
\label{nls}
\end{equation}

Rogue waves on the constant-amplitude wave background have been obtained 
for the AL equation (\ref{al}) and related discrete equations in \cite{AkhAnk2,AkhAnk1,YangOhta,ZhaoYu}. 
They have been observed in numerical experiments \cite{Agafontsev}. Higher-order rogue waves have been studied by using the inverse scattering 
method \cite{Baofeng}. Further generalization of these rogue waves to fully discrete integrable NLS equation was recently given in \cite{OhtaFeng}. {\em What we will develop in this work is the construction of rogue waves on the background 
of discrete standing periodic waves.} The discrete periodic and double-periodic waves were obtained previously for the AL equation in \cite{Chow,Huang,8}.

In a series of recent works, we have constructed rogue waves on the background of standing periodic waves in the continuous integrable models including the NLS equation \cite{CPnls,CPW}, the derivative NLS equation \cite{Chen-DNLS,2}, the modified KdV equation \cite{CPkdv,Chen-JNLS}, and the sine--Gordon equation \cite{PW}. The latest work in this series was the first construction of such  rogue waves in the discrete modified KdV equation \cite{Chen-Pel-2022}. In all works, we used the nonlinearization method \cite{5,6} which allowed us to characterize the standing periodic waves as the 
restriction of solutions of nonlinear models as squared solutions 
of the linear Lax equations. 

However, we have failed to characterize the discrete standing periodic waves 
by using constraints of the nonlinearization method for the AL equation 
(developed in \cite{3,7}) because the constraints generate difference equations which are not satisfied by the standing periodic waves. As a result, we had to develop new ideas based on separation of variables for the standing periodic waves 
in the linear Lax equations. This separation of variables is similar 
to the approach in \cite{DS} used to characterize the modulational stability 
of standing periodic waves in the continuous NLS equation. With this approach, 
we can obtain analytically the end points of spectral bands of the Lax spectrum and the corresponding eigenfunctions, which are then used to obtain 
the rogue wave solutions on the background of discrete standing periodic waves. 

Integrability of the AL equation can be expressed by using 
two different Lax formulations. The standard formulation explored previously in the context of rogue waves on the constant-amplitude wave background \cite{AkhAnk2,Baofeng,YangOhta} is irrelevant for the modulational stability and rogue waves on the standing periodic waves due to several issues: (i) the location of the Lax spectrum is different, (ii) there exists no relation between squared eigenfunctions and the linearized AL equation, and consequently, (iii) the rogue waves are not properly defined. On the other hand, we show here that the alternative formulation related to symplectic 
integrable map of the nonlinearization method in \cite{3} allows us to fix issues (i), (ii), and (iii) 
and to describe properties of the rogue waves on the 
standing periodic wave background. {\em The main novelty of our work  is that our results are derived from non-standard Lax formulation of the AL equation. }

Among possible applications of our main results, we can mention recent studies 
of breathers and rogue waves in non-integrable discrete settings \cite{Sul}. 
Many numerical results are based on the homotopy continuation of breathers 
and rogue waves from their integrable limits expressed by the AL equation.
With the precise construction of the standing periodic waves and their rogue waves, we can then utilize these solutions in the homotopy continuation towards 
the nonintegrable versions of the discrete NLS equations. 

The article is organized as follows. Section \ref{sec:2} presents the main results of our computations. Section \ref{sec:3} establishes the relation between squared eigenfunctions of the linear Lax equations and solutions of the linearized AL equation. Section \ref{sec:4} presents analytical and numerical approximations of the Lax and stability spectra. Section \ref{sec:6} gives construction of nonperiodic solutions of the linear Lax equations in terms of periodic eigenfunctions. Section \ref{sec:5} explains how the Darboux transformation is used to 
obtain the rogue wave solutions from the nonperiodic solutions of the linear Lax equations and to study magnification factors of the rogue waves.

\section{Lax pair, standing periodic waves, and rogue waves}
\label{sec:2}

The AL equation (\ref{al}) can be represented as the compatibility condition for a Lax pair of linear equations:
\begin{equation}\label{lax-intro}
\varphi_{n+1} = U(u_n,\lambda) \varphi_n, \qquad \dot{\varphi}_n = V(u_n,\lambda) \varphi_n,
\end{equation}
defined by the matrices
\begin{align*}
U(u_n,\lambda) = \frac{1}{\sqrt{1+ |u_n|^2}} \left(\begin{array}{cc}
\lambda & u_n\\
- \bar{u}_n & \lambda^{-1}\\
\end{array}
\right)
\end{align*}
and
\begin{align*}
 \qquad V(u_n,\lambda) = i \left(\begin{array}{cc}
 \frac12 \left(\lambda^2 + \lambda^{-2} + u_n \bar{u}_{n-1} + \bar{u}_n u_{n-1} \right) & \lambda u_n - \lambda^{-1} u_{n-1} \\
- \lambda \bar{u}_{n-1} + \lambda^{-1} \bar{u}_n &  -\frac12 \left(\lambda^2 + \lambda^{-2} + u_n \bar{u}_{n-1} + \bar{u}_n u_{n-1} \right)\\
\end{array}
\right),
\end{align*}
where $\{ \varphi_n\}_{n \in \mathbb{Z}} \in (\mathbb{C}^2)^{\mathbb{Z}}$ depends on time $t$, $\lambda \in \mathbb{C}$ is a spectral parameter which is constant in $n$ and $t$, and $\bar{u}_n(t)$ denotes the complex-conjugate of $u_n(t)$.

\begin{remark}
There exists a simpler Lax pair, 	
\begin{equation}\label{lax-1-intro}
\varphi_{n+1} = U(u_n,\lambda) \varphi_n, \quad U(u_n,\lambda) = \left(\begin{array}{cc}
\lambda & u_n\\
- \bar{u}_n & \lambda^{-1}\\
\end{array}
\right),
\end{equation}
and
\begin{equation}\label{lax-2-intro}
\dot{\varphi}_n = V(u_n,\lambda) \varphi_n, \ V(u_n,\lambda) = i \left(\begin{array}{cc}
\frac12 \left(\lambda^2 + \lambda^{-2} \right) + u_n \bar{u}_{n-1} & \lambda u_n - \lambda^{-1} u_{n-1}\\
- \lambda \bar{u}_{n-1} + \lambda^{-1} \bar{u}_n & - \frac12 \left(\lambda^2 + \lambda^{-2} \right) - \bar{u}_n u_{n-1}\\
\end{array}
\right),
\end{equation}
commutativity of which also yields the AL equation (\ref{al}). We have found that the stability spectrum of the standing periodic waves can be characterized from the Lax spectrum associated with the linear system (\ref{lax-intro}) due to the squared eigenfunction relation, however, we did not find the squared eigenfunction relation for eigenfunctions of the linear system (\ref{lax-1-intro})--(\ref{lax-2-intro}).
\label{rem-lax}
\end{remark}

We consider the standing wave solution of the AL equation (\ref{al}) in the form:
\begin{equation}
u_n(t) = U_n e^{2i \omega t}, \label{2.1}
\end{equation}
where $\{ U_n \}_{n \in \mathbb{Z}} \in \mathbb{C}^{\mathbb{Z}}$ and $\omega \in \mathbb{R}$ is a frequency parameter. The AL equation \eqref{al} with the standing wave reduction (\ref{2.1}) becomes the second-order difference equation
\begin{equation} \label{2.2}
(1 + |U_n|^2) (U_{n+1} + U_{n-1}) = 2 \omega U_n, \quad n \in \mathbb{Z}.
\end{equation}
It can be integrated with two conserved quantities as in the following lemma.

\begin{lemma}
	\label{lem-conserved}
	Let $\{ U_n \}_{n \in \mathbb{Z}} \in \mathbb{C}^{\mathbb{Z}}$  be a solution of the difference equation (\ref{2.2}). Then, the following real-valued quantities
	\begin{equation} \label{2.2-1}
	F_0 := i (U_n \bar{U}_{n-1} - \bar{U}_n U_{n-1})
	\end{equation}
	and
	\begin{equation} \label{2.2-2}
	F_1 := \omega (U_n \bar{U}_{n-1} + \bar{U}_n U_{n-1}) - |U_n|^2 -  |U_{n-1}|^2 - |U_n|^2 |U_{n-1}|^2
	\end{equation}
	are constant in $n \in \mathbb{Z}$.
\end{lemma}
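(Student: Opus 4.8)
The plan is to verify directly that each quantity is invariant under the shift $n \mapsto n+1$; since a sequence on $\mathbb{Z}$ that is unchanged under a unit shift is constant in $n$, this suffices. Throughout I would use \eqref{2.2} in the solved form $U_{n+1} + U_{n-1} = 2\omega U_n/(1 + |U_n|^2)$ together with its complex conjugate $\bar U_{n+1} + \bar U_{n-1} = 2\omega \bar U_n/(1 + |U_n|^2)$, which is legitimate since $\omega \in \mathbb{R}$. Reality is immediate: writing $z := U_n \bar U_{n-1}$, one has $F_0 = i(z - \bar z) = -2\,\mathrm{Im}(z)$ and the $\omega$-term of $F_1$ equals $2\omega\,\mathrm{Re}(z)$, while the remaining terms of $F_1$ are manifestly real.

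For $F_0$, I would compute the shifted difference
\begin{equation*}
F_0|_{n+1} - F_0|_n = i\left[ \bar U_n(U_{n+1} + U_{n-1}) - U_n(\bar U_{n+1} + \bar U_{n-1}) \right],
\end{equation*}
where I have regrouped so that the symmetric combination $U_{n+1}+U_{n-1}$ and its conjugate appear. Substituting \eqref{2.2} and its conjugate turns the bracket into $\tfrac{2\omega}{1+|U_n|^2}(\bar U_n U_n - U_n \bar U_n) = 0$, so $F_0$ is shift-invariant. This step is routine.

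The substantive computation is the invariance of $F_1$. I would first isolate the $\omega$-free terms of $F_1|_{n+1} - F_1|_n$; after the common $-|U_n|^2$ cancels, these collapse to $(1+|U_n|^2)\big(|U_{n-1}|^2 - |U_{n+1}|^2\big)$. The remaining $\omega$-terms are $\omega\big(U_{n+1}\bar U_n + \bar U_{n+1} U_n - U_n \bar U_{n-1} - \bar U_n U_{n-1}\big)$, and here the grouping used for $F_0$ fails, since it produces the antisymmetric combination $U_{n+1}-U_{n-1}$ (and its conjugate), which \eqref{2.2} does not control. The hard part will be handling these $\omega$-terms.

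My strategy is to eliminate $\omega U_n$ and $\omega \bar U_n$ via \eqref{2.2} rather than the neighbouring values $U_{n\pm 1}$: substituting $\omega U_n = \tfrac12(1+|U_n|^2)(U_{n+1}+U_{n-1})$ and $\omega \bar U_n = \tfrac12(1+|U_n|^2)(\bar U_{n+1}+\bar U_{n-1})$, the cross terms $U_{n+1}\bar U_{n-1}$ and $\bar U_{n+1} U_{n-1}$ cancel in pairs and the $\omega$-contribution reduces to $(1+|U_n|^2)\big(|U_{n+1}|^2 - |U_{n-1}|^2\big)$. This is precisely the negative of the $\omega$-free contribution, so $F_1|_{n+1} - F_1|_n = 0$ and $F_1$ is constant in $n$, completing the proof.
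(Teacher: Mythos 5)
Your proof is correct and takes essentially the same approach as the paper: both establish shift invariance of $F_0$ and $F_1$ by contracting the difference equation (\ref{2.2}) and its conjugate against $\bar U_n$ and $\bar U_{n\pm 1}$ respectively, with the same cross-term cancellation doing the work. In particular, your substitution of $\omega U_n = \tfrac12 (1+|U_n|^2)(U_{n+1}+U_{n-1})$ into the $\omega$-terms is exactly the paper's step of multiplying (\ref{2.2}) by $\bar U_{n-1}$ and $\bar U_{n+1}$ and adding complex conjugates.
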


\begin{proof}
By multiplying (\ref{2.2}) by $\bar{U}_n$ and subtracting the complex conjugate, we obtain
$$
(1+|U_n|^2) [ (U_{n+1} \bar{U}_n - \bar{U}_{n+1} U_n) - (U_{n} \bar{U}_{n-1} - \bar{U}_{n} U_{n-1})] = 0,
$$
from which conservation of $F_0$ follows.

Similarly, multiplying (\ref{2.2}) by $\bar{U}_{n-1}$ and by $\bar{U}_{n+1}$ and adding the complex conjugate yield
$$
(1+|U_n|^2) |U_{n+1}|^2 - \omega (U_n \bar{U}_{n+1} + \bar{U}_{n} U_{n+1}) =
(1+|U_n|^2) |U_{n-1}|^2 - \omega (U_n \bar{U}_{n-1} + \bar{U}_{n} U_{n-1}),
$$
from which conservation of $F_1$ follows.
\end{proof}

Periodic waves with trivial phase are given by real-valued periodic solutions of the difference equation (\ref{2.2}) with $F_0 = 0$ in (\ref{2.2-1}). Two families were obtained in \cite{Chen-Pel-2022} (see also \cite{Huang,8}), which we refer to as the {\bf dnoidal} and {\bf cnoidal} waves.
The dnoidal waves are given in the form:
\begin{equation}
\label{2.3}
U_n =  \frac{{\rm sn}(\alpha; k)}{{\rm cn}(\alpha; k)} {\rm dn}(\alpha n;k), \qquad
\omega = \frac{{\rm dn}(\alpha;k)}{{\rm cn}^2(\alpha;k)},
\end{equation}
with two parameters of $\alpha \in (0,K(k))$ and $k \in (0,1)$, where $k$ is the elliptic modulus.
\begin{itemize}
	\item As $k \to 0$, the dnoidal wave degenerates to the constant-amplitude wave
	\begin{equation}
	\label{2.4}
	U_n = \tan(\alpha), \qquad \omega = {\rm sec}^2(\alpha), \qquad \alpha \in \left(0,\frac{\pi}{2}\right),
	\end{equation}
	\item As $k \to 1$,  the dnoidal wave degenerates into the solitary wave
	\begin{equation}
	\label{2.5}
	U_n = \sinh(\alpha) {\rm sech}(\alpha n), \qquad
	\omega = \cosh(\alpha),  \qquad \alpha \in (0,\infty).
	\end{equation}
\end{itemize}
The cnoidal waves are given in the form:
\begin{equation}\label{2.6}
U_n = \frac{k {\rm sn}(\alpha;k)}{{\rm dn}(\alpha;k)} {\rm cn}(\alpha n;k), \qquad
\omega = \frac{{\rm cn}(\alpha;k)}{ {\rm dn}^2(\alpha;k)},
\end{equation}
with two parameters of $\alpha \in (0,2K(k))$ and $k \in (0,1)$. The limit $k \to 0$ gives the trivial solution, whereas the limit $k \to 1$ gives the same solitary wave (\ref{2.5}).

The standing waves in the limit $\alpha \to 0$ recover the asymptotic approximation (\ref{exp-continuum}) with $\varepsilon = \alpha$.
Since
\begin{align*}
{\rm sn}(\alpha;k) = \alpha + \mathcal{O}(\alpha^3), \quad
{\rm cn}(\alpha;k) = 1 - \frac{1}{2} \alpha^2 + \mathcal{O}(\alpha^4), \quad
{\rm dn}(\alpha;k) = 1 - \frac{1}{2} k^2 \alpha^2 + \mathcal{O}(\alpha^4),
\end{align*}
the dnoidal and cnoidal waves of the AL equation (\ref{al}) yield solutions
$$
\mathfrak{u}(X,T) = e^{i (2-k^2)T} {\rm dn}(X;k), \qquad
\mathfrak{u}(X,T) = e^{i (2k^2-1)T} k {\rm cn}(X;k),
$$
which are the dnoidal and cnoidal waves of the continuous NLS equation (\ref{nls}).

\begin{remark}
	Since the continuous NLS equation (\ref{nls}) also has a family of the standing periodic waves with nontrivial phase \cite{CPW,DS}, we conjecture that there exist complex-valued periodic solutions of the difference equation (\ref{2.2}) with $F_0 \neq 0$ in (\ref{2.2-1}). However, such solutions are not considered in our work.
\end{remark}

Figures \ref{fig-3} and \ref{fig-4} present the main results which are the rogue waves on the dnoidal and cnoidal wave 
backgrounds. We have obtained the rogue waves by using only analytical methods 
with the following steps:
\begin{itemize}
	\item[(i)] characterizing end points of the spectral bands associated with standing periodic waves, 
	\item[(ii)] computing the corresponding periodic eigenfunctions for the end-point eigenvalues,
	\item[(iii)] representing the second solutions of the linear system for the end-point eigenvalues,
	\item[(iv)] performing the 1-fold Darboux transformation (DT) on the standing periodic waves with the periodic and nonperiodic eigenfunctions in (ii) and (iii).
\end{itemize} 
Numerical approximations are only used to represent the rogue waves graphically.
\vspace{0.2cm}
 
\begin{figure}[htbp]
	\includegraphics[width=8cm,height=6cm]{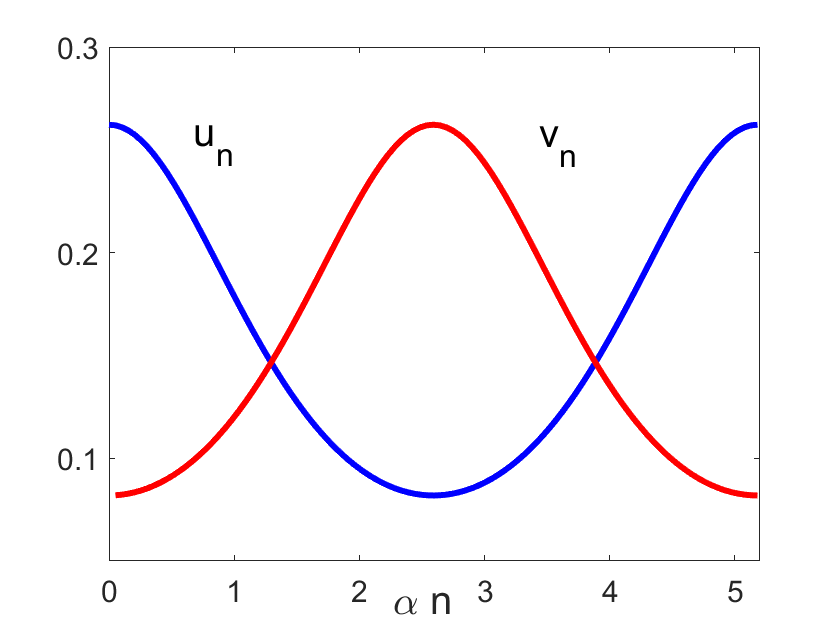}
	\includegraphics[width=8cm,height=6cm]{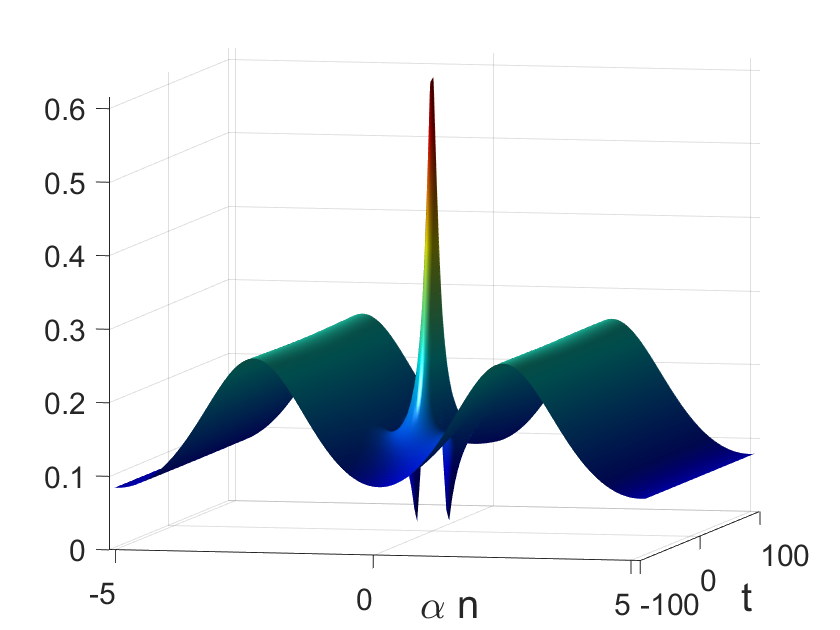}
	\includegraphics[width=8cm,height=6cm]{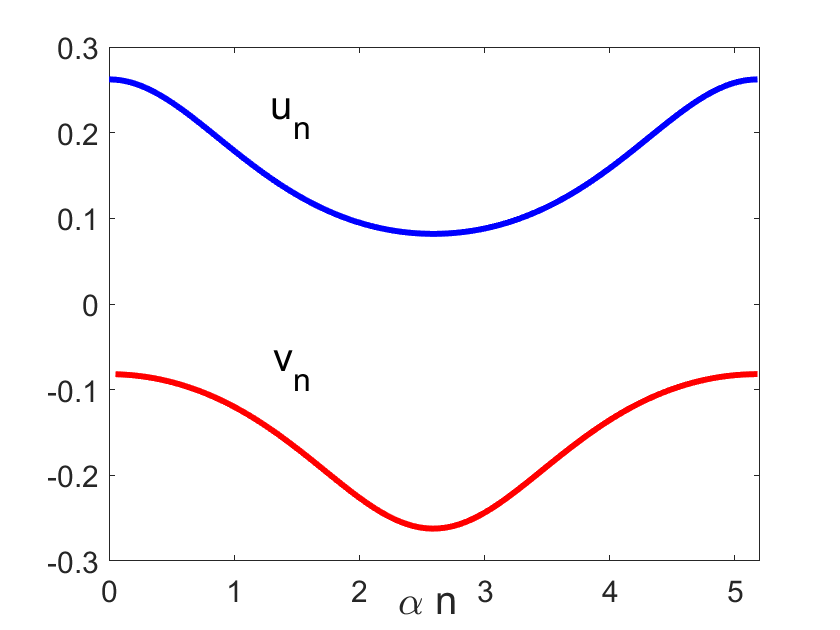}
	\includegraphics[width=8cm,height=6cm]{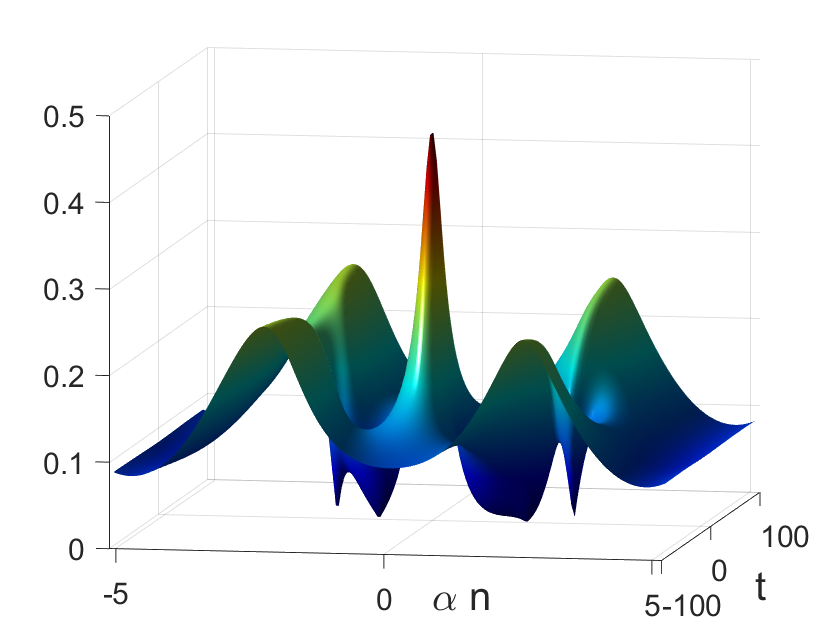}
	\caption{New solutions on the background of the dnoidal wave (\ref{2.3}). Left: profiles of $u_n$ and $v_n := \hat{u}_n$ versus $\alpha n$ obtained from the 1-fold DT with the periodic eigenfunctions. Right: solution surface $|\hat{u}_n(t)|$ versus $\alpha n$ and $t$ obtained from the 1-fold DT with the nonperiodic eigenfunctions. Top and bottom panels show rogue waves for two different end-point eigenvalues of the Lax spectrum. }
	\label{fig-3}
\end{figure}

Figure \ref{fig-3} shows the new solutions obtained by using the 1-fold DT on the background of the dnoidal wave (\ref{2.3}). The top panels correspond to the choice of $\lambda = \lambda_1$ and the bottom panels correspond to the choice of $\lambda = \lambda_2$, see (\ref{3-1-1}) and (\ref{3-1-2}) below. The left panels show the profiles of $u_n$ and $v_n$ versus $\alpha n$, where $u_n$ is the dnoidal wave (\ref{2.3}) and $v_n$ is a new solution obtained with the periodic eigenfunctions after the 1-fold DT. The new solution is a half-period translation of the original dnoidal wave with the sign flip for $\lambda = \lambda_2$ and with no sign flip for $\lambda = \lambda_1$.
The right panels show the solution surface of $|\hat{u}_n(t)|$ versus $\alpha n$ and $t$, where $\hat{u}_n$ is a new solution obtained with the nonperiodic eigenfunctions after the 1-fold DT.  The new solution is an isolated rogue wave on the background of the half-period translated dnoidal wave. 

As a practical outcome of the exact solutions, we can compute the magnification factor of the rogue waves as the quotient between the maximal amplitude of the rogue wave and the maximal amplitude of the dnoidal wave background. We have seen from  Figure \ref{fig-3}  that the maximal amplitude of the rogue wave is attained 
at $n = 0$ and $t = 0$, where we have computed analytically the magnification factor  in the closed form:
\begin{equation}
M_{\rm dn}(\alpha,k) =  1 + \frac{1 - \sigma_1 \sqrt
	{1-k^2} }{{\rm dn}(\alpha;k)},
\label{5.19}
\end{equation}
with $\sigma_1 = -1$ for $\lambda = \lambda_1$ and $\sigma_1 = +1$ for $\lambda = \lambda_2$. In the continuum limit $\alpha \to 0$, it converges to
$M_{\rm dn}(\alpha,k) \to 2 - \sigma_1 \sqrt{1-k^2}$, which reproduces
the correct expression for the magnification factor of the dnoidal wave
in the continuous NLS equation (\ref{nls}) obtained in \cite{CPnls}.

The rogue wave  associated with $\lambda_1$ has a bigger magnification factor than the one associated with $\lambda_2$. Since eigenvalues satisfy the order
$$
0 < \lambda_1 < \lambda_2 < 1 < \lambda_2^{-1} < \lambda_1^{-1},
$$
we associate this difference with the fact that $\lambda_1$ is located further from the unit circle compared to $\lambda_2$ in the Lax spectrum of the dnoidal wave, see Figure \ref{fig-1} below. The same rogue waves are computed from reflected eigenvalues $\lambda_1^{-1}$ and $\lambda_2^{-1}$ and from negative eigenvalues $-\lambda_1$, $-\lambda_2$, $-\lambda_1^{-1}$, and $-\lambda_2^{-1}$, which exist due to the symmetry of the Lax system (\ref{lax-intro}).

\begin{figure}[htbp]
	\includegraphics[width=8cm,height=6cm]{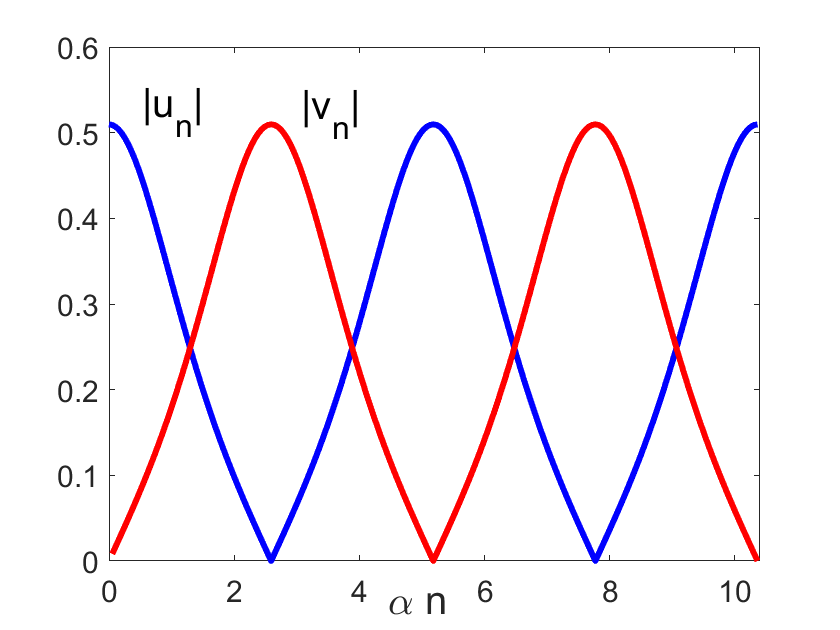}
	\includegraphics[width=8cm,height=6cm]{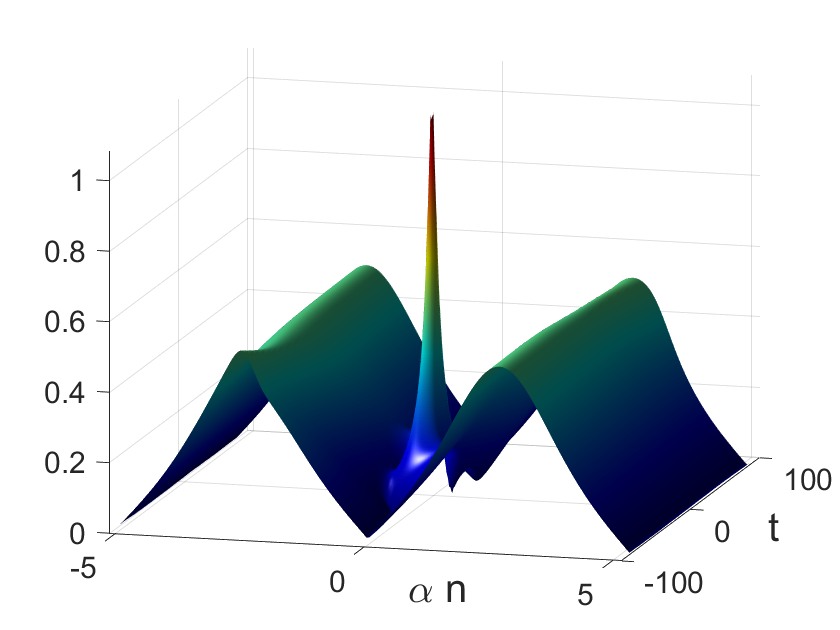}
	\caption{The same as on Figure \ref{fig-3} but for the cnoidal wave (\ref{2.6}) for one end-point eigenvalue of the Lax spectrum.}
	\label{fig-4}
\end{figure}

Figure \ref{fig-4} shows the new solutions obtained by using the 1-fold DT on the background of the cnoidal wave (\ref{2.6}). The profiles of $|u_n|$ and $|v_n|$ versus $\alpha n$ on a single period are shown on the left panel, where $v_n$ is a new solution obtained with the periodic eigenfunctions after the 1-fold DT. The new solution is a quarter-period translation of the original cnoidal wave with a phase factor. The solution surface of $|\hat{u}_n(t)|$ versus $\alpha n$ and $t$ is shown on the right panel, where $\hat{u}_n$ is a new solution obtained with the nonperiodic eigenfunctions after the 1-fold DT. The new solution is an isolated rogue wave on the background of the quarter-period translated cnoidal wave. Rogue waves associated with the reflected eigenvalues due to complex conjugation, reflection about the unit circle, and the sign reflection are displayed by similar solution surfaces since the complex phase is neglected in plotting of $|\hat{u}_n(t)|$.

We have observed from Figure \ref{fig-4} that the maximal
value of $|\hat{u}_n(t)|$ is not attained at $n = 0$ and $t = 0$. However, it is attained at a point located very close to the origin,
therefore, a good approximation of the magnification factor can still be computed at $n = 0$ and $t = 0$, where we have computed analytically the magnification factor for the rogue wave in the closed form:
\begin{align}
M_{\rm cn}(\alpha,k) = 1 + \frac{1}{ {\rm dn}(\alpha;k)}.
\label{5.23}
\end{align}
In the continuum limit $\alpha \to 0$, it converges to
$M_{\rm cn}(\alpha,k) \to 2$, which reproduces
the double magnification factor of the cnoidal wave
in the continuous NLS equation (\ref{nls}) obtained in \cite{CPnls}. 

Note that the magnification factors of the dnoidal and cnoidal waves for the continuous NLS equation (\ref{nls}) have been verified experimentally in \cite{XuKibler}. These main results suggest new experiments in the discrete setting modeled by the AL equation (\ref{al}) and other nonintegrable discretizations of the NLS equation (\ref{nls}).

Among other main results of this work, we mention the squared eigenfunction relation between solutions of the linearized AL equation and the linear Lax equations, see Lemma \ref{lem-spectrum} below. This allows us to connect the Lax spectrum and the stability spectrum and to characterize the periodic eigenfunctions of the linear Lax system from the standing periodic waves, 
see Corollary \ref{cor-squared-eigenfunction}, Corollary \ref{lem-stab}, 
and Lemma \ref{lem-relation} below. 
The Lax spectrum and the stability spectrum for the constant-amplitude wave 
are obtained analytically, see Lemma \ref{lem-constant-wave} below. 
However, we have to approximate the Lax spectrum numerically for the dnoidal and cnoidal waves. It follows from these numerical approximations that 
both the dnoidal and cnoidal waves are modulationally unstable. Finally, 
we give analytical expressions for the nonperiodic eigenfunctions in Lemma \ref{lem-second-solution} and \ref{lem-clear-expressions} below 
and we present the 1-fold DT in the closed form in Lemma \ref{lem-one-fold} below. 

\section{Spectral stability of  standing periodic waves}
\label{sec:3}

Here we will set up the spectral stability problem for the standing periodic waves and relate its eigenfunctions with squared eigenfunctions of the linear 
system (\ref{lax-intro}).

The spectral stability of the standing waves of the form \eqref{2.1} in the time evolution of the AL equation \eqref{al} can be studied by adding a perturbation of the form
\begin{equation}\label{3.1}
u_n(t) = e^{2i\omega t} [U_n + v_n(t)].
\end{equation}
Substituting \eqref{3.1} into the AL equation \eqref{al} and truncating at the linear terms in $v_n$ gives rise to the linearized AL equation
\begin{equation}\label{3.2}
i \dot{v}_n - 2 \omega v_n + (1 + |U_n|^2)(v_{n+1} + v_{n-1}) + (U_{n+1} + U_{n-1}) (U_n \bar{v}_n + \bar{U}_n v_n) = 0
\end{equation}
and its complex conjugate equation 
\begin{equation*}
-i \dot{\bar{v}}_n - 2 \omega \bar{v}_n + (1 + |U_n|^2)(\bar{v}_{n+1} + \bar{v}_{n-1}) + (\bar{U}_{n+1} + \bar{U}_{n-1}) (\bar{U}_n v_n + U_n \bar{v}_n) = 0.
\end{equation*}
Separation of variables with $v_n(t) = V_n e^{\Lambda t}$
and $\bar{v}_n(t) = \tilde{V}_n e^{\Lambda t}$, where $\tilde{V}_n$ is no longer a complex conjugate of $V_n$ if $\Lambda \notin \mathbb{R}$,
gives the spectral stability problem in the form:
\begin{equation}\label{spec-stab}
\left\{ \begin{array}{l} i \Lambda V_n - 2 \omega V_n + (1 + |U_n|^2)(V_{n+1} + V_{n-1}) + (U_{n+1} + U_{n-1}) (U_n \tilde{V}_n + \bar{U}_n V_n) = 0, \\
-i \Lambda \tilde{V}_n - 2 \omega \tilde{V}_n + (1 + |U_n|^2)(\tilde{V}_{n+1} + \tilde{V}_{n-1}) + (\bar{U}_{n+1} + \bar{U}_{n-1}) (\bar{U}_n V_n + U_n \tilde{V}_n) = 0.
\end{array} \right.
\end{equation}
Solutions of the spectral stability problem (\ref{spec-stab}) are computed by using the squared eigenfunctions of the linear system (\ref{lax-intro}).

Let us separate the variables for solutions $\varphi_n = (p_n,q_n)^\mathrm{T}$ of the linear system (\ref{lax-intro}) by using
\begin{equation}
p_n(t) = P_n(t) e^{i \omega t},\qquad q_n(t) = Q_n(t) e^{-i \omega t}. \label{3.3}
\end{equation}
Substituting \eqref{2.1} and \eqref{3.3} into \eqref{lax-intro} yields the following linear system:
\begin{equation}\label{3.4}
\left(\begin{array}{c}
P_{n+1}\\
Q_{n+1}\\
\end{array} \right) =  \frac{1}{\sqrt{1+ |U_n|^2}} \left(\begin{array}{cc}
\lambda & U_n\\
- \bar{U}_n & \lambda^{-1}
\end{array}
\right) \left(\begin{array}{c}
P_{n}\\
Q_{n}\\
\end{array} \right),
\end{equation}
and
\begin{equation}\label{3.5}
\frac{d}{dt} \left(\begin{array}{c}
{P}_{n}\\
{Q}_{n}\\
\end{array} \right)= i \left(\begin{array}{cc}
W_n - \omega & \lambda U_n - \lambda^{-1} U_{n-1}\\
- \lambda \bar{U}_{n-1} + \lambda^{-1} \bar{U}_n & \omega - W_n \\
\end{array}
\right) \left(\begin{array}{c}
P_{n}\\
Q_{n}\\
\end{array} \right),
\end{equation}
where $W_n := \frac12 \left(\lambda^2 + \lambda^{-2} + U_n \bar{U}_{n-1} + \bar{U}_n U_{n-1} \right)$.

The following lemma presents the squared eigenfunction relation between
solutions of the linearized AL equation (\ref{3.2})
and the squared eigenfunctions of the linear system (\ref{3.4})--(\ref{3.5}).

\begin{lemma}
Let $\{(P_n(t),Q_n(t))^{\mathrm{T}} \}_{n \in \mathbb{Z}}$ be a classical solution of the linear system  (\ref{3.4})--(\ref{3.5}) with an arbitrary $\lambda \in \mathbb{C}$ and a solution $\{ U_n \}_{n \in \mathbb{Z}}$ of the difference equation (\ref{2.2}). Then, $\{ v_n(t)\}_{n \in \mathbb{Z}}$ given by 
\begin{equation}
\label{squared-eigen}
v_n = \lambda P_n^2 - \bar{\lambda}^{-1} \bar{Q}_n^2 + U_n (P_n Q_n + \bar{P}_n \bar{Q}_n), \quad n \in \mathbb{Z}
\end{equation}
is a classical solution of the linearized AL equation (\ref{3.2}).
\label{lem-spectrum}
\end{lemma}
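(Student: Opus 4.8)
The plan is to verify the identity by direct substitution, treating $(P_n,Q_n)$ and their conjugates $(\bar P_n,\bar Q_n)$ as the independent quantities at the fixed site $n$ and time $t$. First I would use the spatial Lax equation (\ref{3.4}) to express the shifted eigenfunctions: $P_{n+1},Q_{n+1}$ are read off directly, while $P_{n-1},Q_{n-1}$ are obtained by inverting the transfer matrix (whose determinant is $1+|U_{n-1}|^2$), giving $P_{n-1}=(\lambda^{-1}P_n-U_{n-1}Q_n)/\sqrt{1+|U_{n-1}|^2}$ and $Q_{n-1}=(\bar U_{n-1}P_n+\lambda Q_n)/\sqrt{1+|U_{n-1}|^2}$. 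Substituting these into (\ref{squared-eigen}) writes $v_{n\pm1}$ as quadratic forms in $(P_n,Q_n,\bar P_n,\bar Q_n)$. Next I would compute $\dot v_n$ by differentiating (\ref{squared-eigen}) and inserting the temporal Lax equation (\ref{3.5}) for $\dot P_n,\dot Q_n$ together with its complex conjugate for $\dot{\bar P}_n,\dot{\bar Q}_n$ (recall $\dot U_n=0$ for the standing wave). The entire left-hand side of (\ref{3.2}) then becomes a quadratic form in $(P_n,Q_n,\bar P_n,\bar Q_n)$, and the claim is equivalent to the vanishing of its coefficients.

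A structural observation makes the bookkeeping manageable: both the temporal flow (\ref{3.5}) and the transfer matrix (\ref{3.4}) are linear and preserve the splitting into ``holomorphic'' variables $(P_n,Q_n)$ and ``anti-holomorphic'' variables $(\bar P_n,\bar Q_n)$, so no mixed monomial such as $P_n\bar P_n$ ever appears. The only monomials present are $P_n^2,P_nQ_n,Q_n^2$ and their conjugates $\bar P_n^2,\bar P_n\bar Q_n,\bar Q_n^2$, so there are exactly six coefficients to check. Moreover, the linear system (\ref{3.4})--(\ref{3.5}) is invariant under the conjugation--reflection symmetry $(P_n,Q_n,\lambda)\mapsto(\bar Q_n,-\bar P_n,\bar\lambda^{-1})$, under which the combination (\ref{squared-eigen}) obeys $v_n\mapsto -v_n$; since (\ref{3.2}) does not involve $\lambda$ and $\lambda$ is arbitrary, this symmetry pairs the three anti-holomorphic coefficients with the three holomorphic ones, and it suffices to verify the vanishing of the coefficients of $P_n^2$, $P_nQ_n$, and $Q_n^2$.

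The one point that looks dangerous but resolves cleanly is the factor $1+|U_{n-1}|^2$ in the denominator of $v_{n-1}$: multiplying by the coefficient $1+|U_n|^2$ in (\ref{3.2}) seemingly leaves a ratio $(1+|U_n|^2)/(1+|U_{n-1}|^2)$ that cannot cancel against the purely polynomial contributions of $\dot v_n$, $v_n$, and $(1+|U_n|^2)v_{n+1}$. However, for each monomial the contributions of the term $\lambda P_{n-1}^2$ (respectively $-\bar\lambda^{-1}\bar Q_{n-1}^2$) and of the term $U_{n-1}(P_{n-1}Q_{n-1}+\bar P_{n-1}\bar Q_{n-1})$ combine so as to produce a numerator factor $1+|U_{n-1}|^2$; the denominator cancels and $(1+|U_n|^2)v_{n-1}$ reduces to a genuine polynomial in $(P_n,Q_n,\bar P_n,\bar Q_n)$. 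This cancellation is precisely what the coefficients $\lambda$ and $U_n$ in the definition of the squared eigenfunction are built to produce. With this in hand, the coefficients of $P_n^2$ and $Q_n^2$ cancel identically using only (\ref{3.4})--(\ref{3.5}) and the definition of $W_n$ (the $\lambda^{\pm3}$, $\lambda\omega$, and $U$-bilinear contributions from the temporal part match those from the two shifts). The coefficient of the cross-term $P_nQ_n$ collects to $(1+|U_n|^2)(U_{n+1}+U_{n-1})-2\omega U_n$, which vanishes precisely by the difference equation (\ref{2.2}); the conjugate cross-term $\bar P_n\bar Q_n$ then follows from the symmetry above.

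I expect the main obstacle to be purely organizational rather than conceptual: correctly assembling and cancelling the large number of terms, in particular tracking the $W_n$-contributions from the temporal equation and confirming the denominator cancellation in $(1+|U_n|^2)v_{n-1}$ monomial by monomial. No input beyond the Lax equations (\ref{3.4})--(\ref{3.5}) and the single difference equation (\ref{2.2}) should be needed; in particular, the conserved quantities of Lemma \ref{lem-conserved} are not required, since only the relation $(1+|U_n|^2)(U_{n+1}+U_{n-1})=2\omega U_n$ enters, and only through the cross-term coefficient.
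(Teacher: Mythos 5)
Your proposal is correct and is essentially the paper's own proof: both arguments substitute the Lax relations (\ref{3.4})--(\ref{3.5}) (together with the inverted transfer matrix for the backward shift) to reduce the left-hand side of (\ref{3.2}) to a quadratic form in $(P_n,Q_n,\bar P_n,\bar Q_n)$ whose six coefficients cancel, with the stationary equation (\ref{2.2}) entering precisely through the cross terms $P_nQ_n$ and $\bar P_n\bar Q_n$. Your refinements check out: the denominator $1+|U_{n-1}|^2$ does cancel, so that $v_{n-1}=\lambda^{-1}P_n^2-\bar\lambda\bar Q_n^2-U_{n-1}(P_nQ_n+\bar P_n\bar Q_n)$ is polynomial (exactly the expression in the paper's proof), and the symmetry $(P_n,Q_n,\lambda)\mapsto(\bar Q_n,-\bar P_n,\bar\lambda^{-1})$, which indeed preserves (\ref{3.4})--(\ref{3.5}) and sends $v_n\mapsto -v_n$, legitimately halves the verification that the paper instead carries out for all six monomials.
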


\begin{proof}
It follows from \eqref{3.4} that
\begin{equation}\label{lax-1-intro-3}
\left(\begin{array}{c}
P_{n-1}\\
Q_{n-1}\\
\end{array} \right) =  \frac{1}{\sqrt{1+ |U_{n-1}|^2}} \left(\begin{array}{cc}
\lambda^{-1} & -U_{n-1}\\
\bar{U}_{n-1} & \lambda \\
\end{array}
\right) \left(\begin{array}{c}
P_{n}\\
Q_{n}\\
\end{array} \right).
\end{equation}
By using \eqref{3.5} and (\ref{squared-eigen}), we obtain
\begin{align*}
i \dot{v}_n &= 2 \omega v_n - P_n^2(\lambda^3 + \lambda^{-1} + \lambda \bar{U}_n U_{n-1} + \lambda^{-1}
|U_n|^2) - Q_n^2(\lambda U_n^2 - \lambda^{-1} U_n U_{n-1})  \\
 & \quad + P_n Q_n \left[- 2 \lambda^2 U_n + U_{n-1} - U_{n+1} - |U_n|^2 (U_{n+1} + U_{n-1})\right]  \\
 & \quad+ \bar{P}_n^2 (\bar{\lambda}^{-1} U_n^2 - \bar{\lambda} U_n U_{n-1}) + \bar{Q}_n^2 \left(
\bar{\lambda} + \bar{\lambda}^{-3} + \bar{\lambda}^{-1} \bar{U}_n U_{n-1} + \bar{\lambda} |U_n|^2 \right) \\
& \quad + \bar{P}_n\bar{Q}_n \left[ -2 \bar{\lambda}^{-2} U_n + U_{n-1} - U_{n+1}  -
|U_n|^2 (U_{n+1} + U_{n-1}) \right].
\end{align*}
By using \eqref{3.4} and (\ref{lax-1-intro-3}), we obtain
\begin{align*}
(1 + |U_n|^2) v_{n+1} &=  P_n^2  (\lambda^3 - \lambda U_{n+1} \bar{U}_n ) +  P_n Q_n \left[
2 \lambda^2 U_n  + U_{n+1} (1 - |U_n|^2) \right]    \\
& \quad + Q_n^2 (\lambda U_n^2 + \lambda^{-1} U_n U_{n+1}) - \bar{P}_n^2 (\bar{\lambda}^{-1} U_n^2 +  \bar{\lambda} U_n U_{n+1}) \\
&  \quad + \bar{P}_n\bar{Q}_n \left[ 2 \bar{\lambda}^{-2} U_n +
U_{n+1} (1 - |U_n|^2) \right] + \bar{Q}_n^2 (\bar{\lambda}^{-1} U_{n+1} \bar{U}_n - \bar{\lambda}^{-3})
\end{align*}
and
\begin{align*}
v_{n-1} = \lambda^{-1} P_n^2 - \bar{\lambda} \bar{Q}_n^2 - U_{n-1} (P_n Q_n + \bar{P}_n\bar{Q}_n).
\end{align*}
By using \eqref{squared-eigen}, we obtain
\begin{align*}
U_n \bar{v}_n + \bar{U}_n v_n = \lambda \bar{U}_n P_n^2 - \lambda^{-1} U_n Q_n^2 + \bar{\lambda} U_n \bar{P}_n^2 - \bar{\lambda}^{-1} \bar{U}_n \bar{Q}_n^2 + 2 |U_n|^2 (P_n Q_n + \bar{P}_n\bar{Q}_n).
\end{align*}
When these expressions are substituted into the linearized AL equation \eqref{3.2}, all terms cancel out after direct computations.
\end{proof}

Two corollaries follow from the result of Lemma \ref{lem-spectrum}. 

\begin{corollary}
	\label{cor-squared-eigenfunction}
	If the linear system (\ref{3.4})--(\ref{3.5}) is solved with the separation of variables as
	\begin{equation}
	P_n(t) = \hat{P}_n e^{\Omega t}, \qquad Q_n(t) = \hat{Q}_n e^{\Omega t},  \label{eigenfunct-separ}
	\end{equation}
	where $\{ (\hat{P}_n, \hat{Q}_n)^{\mathrm{T}} \}$ is $t$-independent, 
	then the spectral stability problem (\ref{spec-stab}) is solved with
	\begin{equation}
	\label{Lambda-Omega-relation}
	V_n = \lambda \hat{P}_n^2 + U_n \hat{P}_n \hat{Q}_n, \quad
	\tilde{V}_n = -\lambda^{-1} \hat{Q}_n^2 + \bar{U}_n \hat{P}_n \hat{Q}_n, \quad
	\Lambda = 2 \Omega
	\end{equation}
	where $\tilde{V}_n$ is no longer a complex conjugate of $V_n$ if $\Lambda \notin \mathbb{R}$.
\end{corollary}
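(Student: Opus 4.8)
The plan is to read the corollary off from Lemma \ref{lem-spectrum} by substituting the separated ansatz \eqref{eigenfunct-separ} into the squared-eigenfunction formula \eqref{squared-eigen} and then matching temporal exponentials. First I would insert $P_n(t) = \hat{P}_n e^{\Omega t}$ and $Q_n(t) = \hat{Q}_n e^{\Omega t}$ into \eqref{squared-eigen}. Since the terms $\lambda P_n^2$ and $U_n P_n Q_n$ carry the factor $e^{2\Omega t}$ while the complex-conjugate terms $\bar\lambda^{-1}\bar Q_n^2$ and $U_n \bar P_n \bar Q_n$ carry the factor $e^{2\bar\Omega t}$, the resulting $v_n(t)$ splits as
\begin{equation*}
v_n(t) = V_n e^{2\Omega t} + V_n' e^{2\bar\Omega t}, \qquad V_n = \lambda \hat P_n^2 + U_n \hat P_n \hat Q_n,
\end{equation*}
with $V_n' = -\bar\lambda^{-1}\bar{\hat Q}_n^2 + U_n \bar{\hat P}_n \bar{\hat Q}_n$. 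Taking the complex conjugate gives $\bar v_n(t) = \tilde V_n e^{2\Omega t} + \bar V_n e^{2\bar\Omega t}$ with coefficient $\tilde V_n = \bar{V_n'} = -\lambda^{-1}\hat Q_n^2 + \bar U_n \hat P_n \hat Q_n$, which identifies exactly the quantities in \eqref{Lambda-Omega-relation} together with the relation $\Lambda = 2\Omega$.

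By Lemma \ref{lem-spectrum} this $v_n(t)$ solves the linearized AL equation \eqref{3.2}, and hence the pair $(v_n,\bar v_n)$ solves the coupled system formed by \eqref{3.2} and its complex conjugate. The key structural point I would exploit is that the coefficients of this system are independent of $t$ because $\{U_n\}$ is $t$-independent, so the system admits a decomposition into temporal eigenmodes. Whenever $\Omega \notin \mathbb{R}$ the two exponentials $e^{2\Omega t}$ and $e^{2\bar\Omega t}$ are linearly independent, and matching the coefficients of $e^{2\Omega t}$ in \eqref{3.2} and in its conjugate forces $(V_n,\tilde V_n)$ with $\Lambda = 2\Omega$ to satisfy the spectral stability problem \eqref{spec-stab}. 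The factor of $2$ arises precisely from the squaring, since $i\dot v_n$ contributes $2i\Omega V_n e^{2\Omega t}$, matching the $i\Lambda V_n$ term of \eqref{spec-stab}.

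The main obstacle I anticipate is the degenerate case $\Omega \in \mathbb{R}$, where the two exponentials coincide and the coefficient-matching argument breaks down. I see two ways to close this gap. One is an analyticity argument: the assertion that $(V_n,\tilde V_n,\Lambda=2\Omega)$ solves \eqref{spec-stab} is a collection of algebraic identities in $\hat P_n$, $\hat Q_n$, $\lambda$, $U_n$, $\omega$, and $\Omega$, and having verified them on the open set $\Omega \notin \mathbb{R}$ they extend to all $\Omega$ by continuity. The cleaner alternative, which I would prefer to present, is to bypass the splitting entirely: substituting \eqref{eigenfunct-separ} into \eqref{3.5} converts the time evolution into the time-independent relation $\Omega (\hat P_n,\hat Q_n)^{\mathrm T} = i M_n (\hat P_n,\hat Q_n)^{\mathrm T}$, where $M_n$ is the ($t$-independent) matrix in \eqref{3.5}, while \eqref{3.4} becomes an autonomous recurrence for $(\hat P_n,\hat Q_n)$. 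One then substitutes $V_n$, $\tilde V_n$, and $\Lambda = 2\Omega$ directly into \eqref{spec-stab} and checks the cancellation using these two time-independent relations. This is essentially the computation of Lemma \ref{lem-spectrum} with $\tfrac{d}{dt}$ replaced by the scalar $\Omega$, so it holds for every $\Omega \in \mathbb{C}$ and requires no genericity assumption.
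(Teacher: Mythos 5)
Your proposal is correct, and its first half coincides with the paper's own proof: the paper likewise substitutes \eqref{eigenfunct-separ} into \eqref{squared-eigen} and its complex conjugate, observes that $(v_n,\bar v_n)$ becomes a superposition of a mode proportional to $e^{2\Omega t}$, whose coefficients are exactly $(V_n,\tilde V_n)$ in \eqref{Lambda-Omega-relation}, and a mode proportional to $e^{2\bar\Omega t}$, and concludes that each mode separately solves \eqref{spec-stab}. What you genuinely add is the observation that this splitting step needs $e^{2\Omega t}$ and $e^{2\bar\Omega t}$ to be linearly independent, which fails precisely when $\Omega\in\mathbb{R}$. The paper's proof is silent on this degenerate case, and it is not a negligible one here: real $\Omega$ is exactly the modulational-instability part of the spectrum (cf.\ Lemma \ref{lem-constant-wave}), and the end-point eigenvalues used for the rogue waves in Lemma \ref{lem-relation} and Section \ref{sec:5} have $\Omega=0$. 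Your preferred repair --- letting $\tfrac{d}{dt}$ act as multiplication by $\Omega$ in \eqref{3.5} and verifying \eqref{spec-stab} by redoing the computation of Lemma \ref{lem-spectrum} for the unbarred half alone --- is sound, and it is guaranteed to succeed by the structure of that computation: every term in the expansions in the proof of Lemma \ref{lem-spectrum} is quadratic either in $(P_n,Q_n)$ or in $(\bar P_n,\bar Q_n)$, never mixed, and the relations used there ((\ref{3.4}), (\ref{lax-1-intro-3}), (\ref{3.5}), (\ref{2.2})) preserve this grading, so the cancellation must hold separately within each group; the unbarred group is precisely the identity you need, valid for every $\Omega\in\mathbb{C}$. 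Your fallback continuity argument, by contrast, is shakier than you suggest: $\Omega$ is not a free parameter but is pinned by $\Omega^2+Q(\lambda)=0$ (Corollary \ref{lem-stab}), so one would have to show that the non-real-$\Omega$ locus is dense in the relevant set of data $(\lambda,\hat{P}_n,\hat{Q}_n,\Omega)$ and that eigenfunctions can be chosen continuously along it; the direct verification makes all of this unnecessary.
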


\begin{proof}
It follows from (\ref{squared-eigen}) that the squared eigenfunction relation yields
\begin{align*}
\left\{ \begin{array}{l}
v_n = \lambda P_n^2 - \bar{\lambda}^{-1} \bar{Q}_n^2 + U_n (P_n Q_n + \bar{P}_n \bar{Q}_n), \\
\bar{v}_n = \bar{\lambda} \bar{P}_n^2 - \lambda^{-1} Q_n^2 + \bar{U}_n (P_n Q_n + \bar{P}_n \bar{Q}_n).
\end{array}
\right.
\end{align*}
Substituting (\ref{eigenfunct-separ}) into these expressions yields a linear superposition of two solutions in the form 
$$
v_n(t) = V_n e^{\Lambda t}, \quad \bar{v}_n(t) = \tilde{V}_n e^{\Lambda t}.
$$
One solution is given by (\ref{Lambda-Omega-relation}) for $\Lambda = 2 \Omega$ and another solution is given by 
$$
V_n = -\bar{\lambda}^{-1} \bar{\hat{Q}}_n^2 + U_n \bar{\hat{P}}_n \bar{\hat{Q}}_n, \quad  \tilde{V}_n = \bar{\lambda} \bar{\hat{P}}_n^2 + \bar{U}_n \bar{\hat{P}}_n \bar{\hat{Q}}_n
$$ 
for $\Lambda = 2 \bar{\Omega}$.
\end{proof}

\begin{corollary}
	\label{lem-stab}
The spectral parameters $\Omega$ and $\lambda$
are related by the algebraic equation
\begin{equation}\label{3.6}
\Omega^2 + Q(\lambda) = 0,
\end{equation}
where
\begin{align*}
Q(\lambda) := \frac14 \left(\lambda^2 + \lambda^{-2} \right)^2 - \omega \left(\lambda^2 + \lambda^{-2} \right) + \omega^2 + \frac{i}{2}  F_0 \left( \lambda^2 - \lambda^{-2}\right) - \frac{1}{4} F_0^2 - F_1.
\end{align*}	
\end{corollary}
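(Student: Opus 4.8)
The plan is to substitute the exponential separation (\ref{eigenfunct-separ}) directly into the temporal Lax equation (\ref{3.5}), turning it into a pointwise (in $n$) algebraic eigenvalue problem. Since $\frac{d}{dt}(\hat{P}_n e^{\Omega t}) = \Omega \hat{P}_n e^{\Omega t}$, equation (\ref{3.5}) becomes $\Omega (\hat{P}_n,\hat{Q}_n)^{\mathrm{T}} = i M_n (\hat{P}_n,\hat{Q}_n)^{\mathrm{T}}$, where $M_n$ is the traceless matrix appearing inside the bracket of (\ref{3.5}). A nontrivial eigenvector exists if and only if $\det(iM_n - \Omega I) = 0$. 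Because $iM_n$ has zero trace, the characteristic polynomial collapses to $\Omega^2 + \det(iM_n) = 0$, and writing out the entries (using $\det(iM_n) = (W_n-\omega)^2 + A_n B_n$ with $A_n = \lambda U_n - \lambda^{-1}U_{n-1}$ and $B_n = -\lambda\bar{U}_{n-1} + \lambda^{-1}\bar{U}_n$) gives
$$\Omega^2 + (W_n - \omega)^2 + \left(\lambda U_n - \lambda^{-1}U_{n-1}\right)\left(-\lambda\bar{U}_{n-1} + \lambda^{-1}\bar{U}_n\right) = 0.$$
It then remains to identify $Q(\lambda)$ with the $n$-dependent coefficient and, crucially, to show it is constant in $n$.

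The main work, and the main obstacle, is to collapse this manifestly $n$-dependent quantity into the stated closed form in the conserved quantities $F_0$ and $F_1$. I would first expand the off-diagonal product to $-\lambda^2 U_n\bar{U}_{n-1} - \lambda^{-2}\bar{U}_n U_{n-1} + |U_n|^2 + |U_{n-1}|^2$. Writing $r_n := \frac{1}{2}(U_n\bar{U}_{n-1} + \bar{U}_n U_{n-1})$, so that $W_n = \frac{1}{2}(\lambda^2+\lambda^{-2}) + r_n$, the key is the decomposition $U_n\bar{U}_{n-1} = r_n - \frac{i}{2}F_0$ and $\bar{U}_n U_{n-1} = r_n + \frac{i}{2}F_0$, which follows directly from the definition (\ref{2.2-1}) of $F_0$. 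These turn the $\lambda^{\pm2}$ terms into $-r_n(\lambda^2+\lambda^{-2}) + \frac{i}{2}F_0(\lambda^2 - \lambda^{-2})$, producing the claimed $\frac{i}{2}F_0(\lambda^2-\lambda^{-2})$ contribution, while the $r_n(\lambda^2 + \lambda^{-2})$ piece exactly cancels the cross term from expanding $(W_n-\omega)^2 = \left(\frac{1}{2}(\lambda^2+\lambda^{-2}) - \omega + r_n\right)^2$, leaving the residue $(\frac{1}{2}(\lambda^2+\lambda^{-2})-\omega)^2 + r_n^2 - 2\omega r_n + |U_n|^2 + |U_{n-1}|^2 + \frac{i}{2}F_0(\lambda^2-\lambda^{-2})$.

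To eliminate the surviving $n$-dependence I would invoke two algebraic identities. First, the conservation of $F_1$ from (\ref{2.2-2}) rewrites $-2\omega r_n + |U_n|^2 + |U_{n-1}|^2 = -F_1 - |U_n|^2|U_{n-1}|^2$; second, taking the modulus of $U_n\bar{U}_{n-1} = r_n - \frac{i}{2}F_0$ gives $|U_n|^2|U_{n-1}|^2 = r_n^2 + \frac{1}{4}F_0^2$, so that $r_n^2 - |U_n|^2|U_{n-1}|^2 = -\frac{1}{4}F_0^2$. Combining these removes $r_n$ entirely and produces the constants $-\frac{1}{4}F_0^2 - F_1$, while $\left(\frac{1}{2}(\lambda^2+\lambda^{-2})-\omega\right)^2 = \frac{1}{4}(\lambda^2+\lambda^{-2})^2 - \omega(\lambda^2+\lambda^{-2}) + \omega^2$, at which point the coefficient coincides with $Q(\lambda)$ as stated. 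The delicate point throughout is the bookkeeping of the complex quadratic terms $U_n\bar{U}_{n-1}$ versus $\bar{U}_n U_{n-1}$, since it is precisely their imaginary part (captured by $F_0$) that generates the $\frac{i}{2}F_0(\lambda^2-\lambda^{-2})$ term; once Lemma \ref{lem-conserved} guarantees that $F_0$ and $F_1$ are independent of $n$, the $n$-independence of $Q(\lambda)$, and hence the consistency of the algebraic relation (\ref{3.6}) across all sites, follows automatically.
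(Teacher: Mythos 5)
Your proposal is correct and follows essentially the same route as the paper's proof: separation of variables reduces the temporal Lax equation to an algebraic $2\times 2$ eigenvalue problem whose vanishing (traceless) determinant gives $\Omega^2 + (W_n-\omega)^2 + \bigl(\lambda U_n - \lambda^{-1}U_{n-1}\bigr)\bigl(-\lambda\bar{U}_{n-1} + \lambda^{-1}\bar{U}_n\bigr) = 0$, after which the conserved quantities $F_0$ and $F_1$ eliminate the $n$-dependence. The paper compresses the algebraic reduction into a single sentence (``expanding the determinant and using the conserved quantities''), whereas you carry it out explicitly and correctly, including the key identity $r_n^2 - |U_n|^2|U_{n-1}|^2 = -\tfrac14 F_0^2$.
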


\begin{proof}
	After separation of variables with (\ref{eigenfunct-separ}), the time-evolution problem (\ref{3.5}) becomes a linear algebraic system, which admits a nonzero solution if and only if the determinant of the coefficient matrix is zero:
$$
\left| \begin{array}{cc}
W_n - \omega + i \Omega & \lambda U_n - \lambda^{-1} U_{n-1}\\
- \lambda \bar{U}_{n-1} + \lambda^{-1} \bar{U}_n & i \Omega + \omega - W_n \\
\end{array}
\right| = 0.
$$
Expanding the determinant and using the conserved quantities  (\ref{2.2-1}) and (\ref{2.2-2})
yields the algebraic equation for $\Omega$ in the form (\ref{3.6}).
\end{proof}

\begin{remark}
Corollaries \ref{cor-squared-eigenfunction} and \ref{lem-stab} relate the spectral parameters $\Lambda$, $\Omega$, and $\lambda$.
As a result, the stability spectrum $\Lambda$ of the spectral stability problem (\ref{spec-stab}) is fully determined in terms of the Lax spectrum $\lambda$ 
of the spectral problem (\ref{3.4}), for which eigenfunctions 
$\{  (\hat{P}_n,\hat{Q}_n)^{\mathrm{T}}\}_{n \in \mathbb{Z}}$ and 
$\{  (V_n,\tilde{V}_n)^{\mathrm{T}}\}_{n \in \mathbb{Z}}$
are required to be bounded in $n \in \mathbb{Z}$.
\end{remark}

\begin{remark}
Solving (\ref{3.6}) for $F_0 = 0$, we obtain 	
\begin{equation}
\label{Omega}
\Omega = \pm \frac{i}{2\lambda^2} \sqrt{P(\lambda)},
\end{equation}
where
\begin{align}
\label{polynomial-P}
P(\lambda) = 4 \lambda^4 Q(\lambda) =  \lambda^8 - 4 \omega \lambda^6 + 2 (1 + 2 \omega^2 - 2 F_1 ) \lambda^4	- 4 \omega \lambda^2 +1
\end{align}
is the same polynomial as in our previous work \cite{Chen-Pel-2022} up to the definition of $\omega$ and $F_1$. The polynomial $P(\lambda)$ was obtained in \cite{Chen-Pel-2022} by using the nonlinearization method, for which a certain relation between squared eigenfunctions $\{ (P_n,Q_n)^{\mathrm{T}} \}_{n \in \mathbb{Z}}$ of the linear system (\ref{3.4})--(\ref{3.5}) with some $\lambda = \lambda_1$ and the potential $\{ U_n \}_{n \in \mathbb{Z}}$ is imposed. After the relation is imposed, the Lax system becomes nonlinear and the potential satisfies a second-order difference equation (\ref{2.2}) which is satisfied by the periodic waves with trivial phase. Integrability of both  the nonlinear Lax system and the difference equation results in the construction of the polynomial $P(\lambda)$ with $\lambda_1$ being a root of $P(\lambda)$ \cite{Chen-Pel-2022}.
\end{remark}

\begin{remark}
	We were not able to recover the standing periodic waves by using the nonlinearization method for the Lax system (\ref{3.4})--(\ref{3.5}) associated with the AL equation (\ref{al}), which was developed in  \cite{3}. The relation between the squared eigenfunctions and the potential imposes constraints which do not recover the second-order difference equation (\ref{2.2}) for the standing periodic waves of the AL equation. Consequently, we have obtained the polynomial $P(\lambda)$ by using separation of variables for the standing periodic waves without relation to the nonlinearization method. Note that the separation of variables does not work for the discrete modified Korteweg--de Vries equation considered in \cite{Chen-Pel-2022}.
\end{remark}

The following lemma presents relations between the squared eigenfunctions $\{ (P_n,Q_n)^{\mathrm{T}} \}_{n \in \mathbb{Z}}$ of the linear system (\ref{3.4})--(\ref{3.5}) with $\lambda = \lambda_1$ being a root of $P(\lambda)$ and the potential $\{ U_n \}_{n \in \mathbb{Z}}$ satisfying  (\ref{2.2}), (\ref{2.2-1}), and  (\ref{2.2-2}) with $F_0 = 0$. In the case of the real-valued potentials (e.g., for the standing periodic waves with trivial phase), these relations recover those obtained in \cite{Chen-Pel-2022} with the nonlinearization method. Since the nonlinearization method does not work for the AL equation (\ref{al}), we have established these relations by using substitutions. 

\begin{lemma}
	Let $\{ U_n\}_{n \in \mathbb{Z}} \in \mathbb{C}^{\mathbb{Z}}$ be a solution of (\ref{2.2}), (\ref{2.2-1}), and (\ref{2.2-2}) with $F_0 = 0$. Let $\lambda_1 \in \mathbb{C}$ be a root of the polynomial $P(\lambda)$ in (\ref{polynomial-P}) and define
\begin{equation}
\label{omega-relation}
	\omega = \frac{1}{2} (\lambda_1^2 + \lambda_1^{-2}) + \sigma_1 \sqrt{F_1}
\end{equation}
with $\sigma_1 = \pm 1$.
Then, the eigenfunction $\{ (P_n,Q_n)^\mathrm{T} \}_{n \in \mathbb{Z}}$ of the linear system
	(\ref{3.4})--(\ref{3.5}) with $\lambda = \lambda_1$ is given
	up to a multiplicative constant by
	\begin{equation}
	\label{eigenfunction}
	\left\{ \begin{array}{l}
	P_n^2 = \lambda_1 U_n - \lambda_1^{-1} U_{n-1}, \\
	Q_n^2 = \lambda_1 \bar{U}_{n-1} - \lambda_1^{-1} \bar{U}_{n}, \\
	P_n Q_n = \sigma_1 \sqrt{F_1}-\frac{1}{2} (U_n \bar{U}_{n-1} + \bar{U}_n U_{n-1}).
	\end{array}  \right.
	\end{equation}
	\label{lem-relation}
\end{lemma}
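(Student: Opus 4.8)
The plan is to verify the three identities in \eqref{eigenfunction} directly by substitution, since the nonlinearization method is unavailable. Write $a_n := \lambda_1 U_n - \lambda_1^{-1} U_{n-1}$, $b_n := \lambda_1 \bar U_{n-1} - \lambda_1^{-1} \bar U_n$, and $c_n := \sigma_1\sqrt{F_1} - \tfrac12(U_n\bar U_{n-1} + \bar U_n U_{n-1})$ for the right-hand sides, and aim to produce a solution $(P_n,Q_n)^{\mathrm T}$ of \eqref{3.4}--\eqref{3.5} with $P_n^2 = a_n$, $Q_n^2 = b_n$, and $P_n Q_n = c_n$. Before starting I would record two preliminary facts. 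First, with $F_0 = 0$ the quantity $U_n \bar U_{n-1} = \bar U_n U_{n-1}$ is real, which also gives $|U_n|^2 |U_{n-1}|^2 = (U_n \bar U_{n-1})(\bar U_n U_{n-1})$. Second, completing the square in Corollary~\ref{lem-stab} gives $Q(\lambda) = \bigl[\tfrac12(\lambda^2+\lambda^{-2}) - \omega\bigr]^2 - F_1$ when $F_0 = 0$, so that $\lambda_1$ being a root of $P(\lambda) = 4\lambda^4 Q(\lambda)$ is \emph{equivalent} to the relation \eqref{omega-relation}. Thus the hypothesis that $\lambda_1$ is a root of $P$ enters only through \eqref{omega-relation}, and it forces $\Omega = 0$ in \eqref{3.6}, i.e. the eigenfunction is $t$-independent.

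The first computational step is the compatibility condition $c_n^2 = a_n b_n$, which guarantees that $a_n,b_n,c_n$ arise as $P_n^2,Q_n^2,P_nQ_n$ for some pair $(P_n,Q_n)$. Expanding $a_n b_n = (\lambda_1^2+\lambda_1^{-2})(U_n\bar U_{n-1}) - |U_n|^2 - |U_{n-1}|^2$ and using the defining formula \eqref{2.2-2} for $F_1$ to eliminate $|U_n|^2 + |U_{n-1}|^2$, the relation \eqref{omega-relation} collapses the identity to $c_n^2 = F_1 - 2\sigma_1\sqrt{F_1}\,(U_n\bar U_{n-1}) + (U_n\bar U_{n-1})^2$, which matches the square of $c_n$. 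The second step is to verify that $a_n,b_n,c_n$ solve the recurrences obtained by forming the products $P_{n+1}^2$, $Q_{n+1}^2$, $P_{n+1}Q_{n+1}$ from \eqref{3.4}, namely
\begin{align*}
(1+|U_n|^2)a_{n+1} &= \lambda_1^2 a_n + 2\lambda_1 U_n c_n + U_n^2 b_n, \\
(1+|U_n|^2)b_{n+1} &= \bar U_n^2 a_n - 2\lambda_1^{-1}\bar U_n c_n + \lambda_1^{-2} b_n, \\
(1+|U_n|^2)c_{n+1} &= -\lambda_1 \bar U_n a_n + (1-|U_n|^2)c_n + \lambda_1^{-1} U_n b_n.
\end{align*}
Each is checked by substituting the explicit forms of $a_n,b_n,c_n$, replacing $(1+|U_n|^2)U_{n+1}$ through the difference equation \eqref{2.2}, and invoking \eqref{omega-relation}; the reality of $U_n\bar U_{n-1}$ produces the cancellation $U_n(U_n\bar U_{n-1}) = |U_n|^2 U_{n-1}$ that closes the identity for $a_{n+1}$, and symmetric manipulations close the other two.

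With compatibility and the product recurrences in hand, I would recover $(P_n,Q_n)$ itself: fix $P_0,Q_0$ by choosing square roots of $a_0,b_0$ with $P_0Q_0 = c_0$ (possible by Step~1), define $(P_n,Q_n)$ for all $n$ by the first-order system \eqref{3.4} and its inverse \eqref{lax-1-intro-3}, and show by induction that $P_n^2 = a_n$, $Q_n^2 = b_n$, $P_nQ_n = c_n$; the inductive step is exactly the product recurrences above. This sidesteps the branch ambiguity of taking square roots termwise. Finally, I would confirm the time evolution \eqref{3.5}: since everything is $t$-independent it suffices to show its right-hand side vanishes. Using $\lambda_1 U_n - \lambda_1^{-1}U_{n-1} = a_n = P_n^2$ and $-\lambda_1\bar U_{n-1} + \lambda_1^{-1}\bar U_n = -b_n = -Q_n^2$, the two components factor as $iP_n\bigl[(W_n-\omega)+P_nQ_n\bigr]$ and $-iQ_n\bigl[(W_n-\omega)+P_nQ_n\bigr]$, and \eqref{omega-relation} reduces the bracket $(W_n-\omega)+c_n$ to zero.

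The genuinely delicate points are, first, the global sign consistency when passing from the products back to $(P_n,Q_n)$ — handled cleanly by propagating from a base point via \eqref{3.4} rather than extracting roots pointwise — and, second, the bookkeeping in the product recurrences, where the specific choice \eqref{omega-relation} and the reality forced by $F_0 = 0$ are exactly what make the terms cancel; I expect the recurrence for $c_{n+1}$ to be the most laborious of the three. The uniqueness asserted by ``up to a multiplicative constant'' then follows because $\lambda_1$ is a root of $P$, so $Q(\lambda_1) = 0$ and $\Omega = 0$ is a double root of \eqref{3.6}: the associated band-edge eigenfunction of \eqref{3.4}--\eqref{3.5} is geometrically simple, hence determined up to a scalar, and the constructed solution is its correctly normalized representative.
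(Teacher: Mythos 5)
Your proposal is correct, and its computational core coincides with the paper's: the three product recurrences you verify in your second step are exactly the constraints checked in the paper's proof, using the same ingredients --- the difference equation (\ref{2.2}), the relation (\ref{omega-relation}), and the reality of $U_n\bar U_{n-1}$ forced by $F_0=0$. The organization, however, is reversed, and this is worth noting. The paper \emph{derives} the formulas (\ref{eigenfunction}): since $\lambda_1$ is a root of $P(\lambda)$, the stationary form of (\ref{3.5}) has a singular (rank-one) coefficient matrix, whose eigenvector relation $\tfrac12(U_n\bar U_{n-1}+\bar U_n U_{n-1}-2\sigma_1\sqrt{F_1})P_n+(\lambda_1 U_n-\lambda_1^{-1}U_{n-1})Q_n=0$ fixes the pointwise ratios $P_n^2:P_nQ_n:Q_n^2$; multiplying by $P_n$ and $Q_n$ then yields (\ref{eigenfunction}), and compatibility with (\ref{3.4}) is checked afterwards to ensure the proportionality constant is independent of $n$. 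You instead \emph{postulate} the formulas, verify the consistency $c_n^2=a_nb_n$ (which in the paper is implicit in the vanishing determinant), verify the recurrences, reconstruct $(P_n,Q_n)$ by propagating a base-point choice through (\ref{3.4}), and only then confirm that the right-hand side of (\ref{3.5}) vanishes. Each direction buys something: the paper's derivation explains where the formulas come from and delivers the ``up to a multiplicative constant'' claim almost for free (pointwise ratios from (\ref{3.5}), constancy in $n$ from (\ref{3.4})), whereas your construction is more careful about existence --- in particular the square-root branch consistency, which the paper silently glosses over --- at the price of a somewhat asserted uniqueness step (geometric simplicity of the band-edge eigenfunction), which could be tightened by simply invoking the paper's rank-one argument.
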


\begin{proof}
Relation (\ref{omega-relation}) is found by solving $P(\lambda_1) = 0$ in
$\omega$ and picking one of the two squared roots.
Since the root of $P(\lambda)$ corresponds to $\Omega = 0$,
it follows from (\ref{3.5}) with $\lambda= \lambda_1$
that $P_n$ and $Q_n$ are related by
$$
\frac{1}{2} (U_n \bar{U}_{n-1} + \bar{U}_n U_{n-1} - 2 \sigma_1 \sqrt{F_1}) P_n + (\lambda_1 U_n - \lambda_1^{-1} U_{n-1}) Q_n = 0.
$$
Multiplying this relation by $P_n$ and by $Q_n$ verifies
the relations (\ref{eigenfunction}) in view of (\ref{2.2-1}) with $F_0 = 0$
and (\ref{2.2-2}). The relations (\ref{eigenfunction}) are compatible with the
spectral problem (\ref{3.4}), which results in the constraints
\begin{align*}
(1+|U_n|^2) P_{n+1}^2 &= \lambda_1^2 P_n^2 + 2 \lambda_1 U_n P_n Q_n + U_n^2 Q_n^2, \\
(1+|U_n|^2) Q_{n+1}^2 &= \bar{U}_n^2 P_n^2 - 2 \lambda_1^{-1} \bar{U}_n P_n Q_n + \lambda_1^{-2} Q_n^2 , \\
(1+|U_n|^2) P_{n+1} Q_{n+1} &= -\lambda_1 \bar{U}_n P_n^2 + (1-|U_n|^2) P_n Q_n + \lambda_1^{-1} U_n Q_n^2.
\end{align*}
Substituting (\ref{eigenfunction}) into these relations yields identities by using (\ref{2.2}) and (\ref{omega-relation}).
\end{proof}

\begin{remark}
	Both $U_n$ and $(P_n,Q_n)^\mathrm{T}$ in Lemma \ref{lem-relation} are independent of time $t$. Although the dnoidal and cnoidal waves have real-valued profile $\{ U_n \}_{n \in \mathbb{Z}} \in \mathbb{R}^{\mathbb{Z}}$, Lemma \ref{lem-relation} also holds for complex-valued profiles with $F_0 = 0$.
\end{remark}

\begin{remark}
The eigenfunction $\{ (P_n,Q_n)^\mathrm{T} \}_{n \in \mathbb{Z}}$ in Lemma \ref{lem-spectrum} is defined for an arbitrary value of $\lambda \in \mathbb{C}$. The eigenfunction $\{ (P_n,Q_n)^\mathrm{T} \}_{n \in \mathbb{Z}}$ in Lemma \ref{lem-relation} is defined for the root $\lambda = \lambda_1$ of the polynomial $P(\lambda)$ in (\ref{polynomial-P}). The latter eigenfunction generates solutions of the spectral stability problem (\ref{spec-stab}) by Corollaries \ref{cor-squared-eigenfunction} and \ref{lem-stab} with $\Lambda = 0$.
\end{remark}

\section{Lax and stability spectra for periodic waves with trivial phase}
\label{sec:4}

Here we construct the Lax spectrum and the stability spectrum for the constant-amplitude wave (\ref{2.4}), the dnoidal wave (\ref{2.3}), and the cnoidal wave (\ref{2.6}). The analytical result for the constant-amplitude wave (\ref{2.4}) is obtained with the band-limited Fourier transform. The stability results for the dnoidal and cnoidal waves are obtained with the assistance of numerical approximations of the Lax spectrum from the spectral problem 
(\ref{3.4}).

\subsection{Constant-amplitude wave}

We set $U_n = A$ with $A = \tan(\alpha)$ and $\omega = {\rm sec}^2(\alpha)$
for $\alpha \in (0,\frac{\pi}{2})$ as follows from (\ref{2.4}). The following lemma gives the exact location of the Lax spectrum and the stability spectrum for the constant-amplitude wave.

\begin{lemma}
	\label{lem-constant-wave} The Lax spectrum $\lambda$ of the spectral problem (\ref{3.4}) with $U_n = \tan(\alpha)$ and $\omega = {\rm sec}^2(\alpha)$ consists of the unit circle in $\mathbb{C}$ and two  segments on the real axis:
	$$
	[-\sqrt{\omega}-\sqrt{\omega - 1},-\sqrt{\omega} + \sqrt{\omega-1}] \cup
		[\sqrt{\omega}-\sqrt{\omega - 1},\sqrt{\omega} + \sqrt{\omega-1}].
	$$
	The stability spectrum $\Lambda$ of the spectral problem (\ref{spec-stab}) consists of the segment $[-2 (\omega-1), 2 (\omega-1)]$ on the real axis and the segment $[-4i\sqrt{\omega},4 i \sqrt{\omega}]$ on the purely imaginary axis.
\end{lemma}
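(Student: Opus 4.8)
The plan is to handle the Lax spectrum and the stability spectrum separately, exploiting the fact that the potential $U_n = A = \tan(\alpha)$ is constant in $n$. For the Lax spectrum, since the coefficients of the spectral problem (\ref{3.4}) no longer depend on $n$, I would apply the band-limited Fourier transform, seeking solutions of the form $(P_n,Q_n)^{\mathrm{T}} = (P,Q)^{\mathrm{T}} e^{i\theta n}$ with $\theta \in [-\pi,\pi]$. Such a solution is bounded in $n \in \mathbb{Z}$ precisely when $e^{i\theta}$ is an eigenvalue of the constant transfer matrix
$$
M(\lambda) = \frac{1}{\sqrt{1+A^2}} \begin{pmatrix} \lambda & A \\ -A & \lambda^{-1} \end{pmatrix}.
$$
A direct computation gives $\det M(\lambda) = 1$ and $\operatorname{tr} M(\lambda) = (\lambda + \lambda^{-1})/\sqrt{1+A^2} = (\lambda + \lambda^{-1})/\sqrt{\omega}$, using $1 + A^2 = \sec^2(\alpha) = \omega$. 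Because the two eigenvalues multiply to unity, they both lie on the unit circle if and only if $\operatorname{tr} M(\lambda)$ is real and lies in $[-2,2]$; otherwise one eigenvalue has modulus exceeding one and no bounded eigenfunction exists on all of $\mathbb{Z}$.

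Then I would determine for which $\lambda$ the quantity $\tau(\lambda) := (\lambda + \lambda^{-1})/\sqrt{\omega}$ is real with $|\tau(\lambda)| \le 2$. Writing $\lambda = r e^{i\phi}$, one checks that $\lambda + \lambda^{-1}$ is real exactly when $r = 1$ or $\phi \in \{0,\pi\}$, i.e. $\lambda$ lies on the unit circle or on the real axis. On the unit circle $\lambda + \lambda^{-1} = 2\cos\phi \in [-2,2]$, so $|\tau| \le 2/\sqrt{\omega} \le 2$ and the entire unit circle belongs to the spectrum. On the real axis the condition $|\lambda + \lambda^{-1}| \le 2\sqrt{\omega}$ is equivalent to $\lambda^2 \mp 2\sqrt{\omega}\,\lambda + 1 \le 0$, whose roots $\sqrt{\omega} \pm \sqrt{\omega-1}$ (and their negatives) are mutually reciprocal and bound exactly the two real segments stated in the lemma. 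This yields the claimed Lax spectrum.

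For the stability spectrum I would invoke Corollary \ref{cor-squared-eigenfunction} and Corollary \ref{lem-stab}, which give $\Lambda = 2\Omega$ and $\Omega^2 = -Q(\lambda)$, hence $\Lambda^2 = -4Q(\lambda) = -P(\lambda)/\lambda^4$ with $P$ as in (\ref{polynomial-P}). Since the constant wave is real, $F_0 = 0$, and evaluating (\ref{2.2-2}) with $A^2 = \omega - 1$ gives $F_1 = (\omega-1)^2$. The polynomial $P(\lambda)$ is palindromic, so the substitution $w := \lambda^2 + \lambda^{-2}$ collapses it to $P(\lambda)/\lambda^4 = w^2 - 4\omega w + 8\omega - 4 =: g(w)$, and therefore $\Lambda^2 = -g(w)$.

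The final step is to push the Lax spectrum through the map $\lambda \mapsto w = \lambda^2 + \lambda^{-2}$ and read off the sign of $g(w)$. On the unit circle $w = 2\cos(2\phi)$ sweeps $[-2,2]$, while on the two real segments $w$ sweeps $[2,4\omega-2]$, with $w = 2$ at $\lambda = \pm 1$ and $w = 4\omega - 2$ at the reciprocal endpoints. Since $g$ is an upward parabola with vertex at $w = 2\omega \ge 2$, on $[-2,2]$ it is nonnegative and runs from $g(2) = 0$ up to $g(-2) = 16\omega$, forcing $\Lambda^2 \in [-16\omega,0]$ and hence $\Lambda \in [-4i\sqrt{\omega}, 4i\sqrt{\omega}]$; on $[2,4\omega-2]$ the vertex lies inside and $g$ runs from $0$ down to $g(2\omega) = -4(\omega-1)^2$, forcing $\Lambda^2 \in [0,4(\omega-1)^2]$ and hence $\Lambda \in [-2(\omega-1),2(\omega-1)]$. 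I expect the main obstacle to be bookkeeping rather than conceptual: correctly identifying the range of $w$ over each component of the Lax spectrum and matching it to the sign of $g$, together with verifying the reduction $F_1 = (\omega-1)^2$ and the palindromic collapse of $P(\lambda)$.
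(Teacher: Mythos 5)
Your Lax-spectrum argument is essentially the paper's: the paper inserts the band-limited Fourier modes into (\ref{3.4}) and obtains the condition $z = \lambda + \lambda^{-1} = 2\sqrt{\omega}\cos\theta$, which it then inverts to get the unit circle and the two real segments. Your transfer-matrix formulation ($\det M(\lambda)=1$, trace real and in $[-2,2]$) is the same computation with a slightly more explicit boundedness criterion, and your identification of where $\lambda+\lambda^{-1}$ is real, together with the quadratic $\lambda^2 \mp 2\sqrt{\omega}\lambda + 1 \le 0$, reproduces the paper's answer. This half is correct.

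The stability half is where you diverge from the paper, and there is a genuine gap. You obtain the stability spectrum as the image of the Lax spectrum under $\Lambda^2 = -P(\lambda)/\lambda^4$, invoking Corollaries \ref{cor-squared-eigenfunction} and \ref{lem-stab}. But those corollaries only go one way: a bounded Lax eigenfunction with separated time dependence produces, via squared eigenfunctions, a bounded solution of (\ref{spec-stab}) with $\Lambda = 2\Omega$. This proves that your two segments are \emph{contained in} the stability spectrum; it does not prove that the stability spectrum contains nothing else, which is what ``consists of'' requires. Completeness of the squared-eigenfunction correspondence --- that \emph{every} bounded solution of (\ref{spec-stab}) arises from some $\lambda$ in the Lax spectrum --- appears in the paper only as an unproven remark following Corollary \ref{lem-stab}, so you cannot lean on it. The paper sidesteps this entirely: it applies the band-limited Fourier transform (with parameter $2\theta$) \emph{directly} to the constant-coefficient system (\ref{spec-stab}), obtains the $2\times 2$ determinant condition, and reads off $\Omega = \pm 2i\sqrt{\omega}\sin(\theta)\sqrt{1-\omega\cos^2(\theta)}$, whose range over $\theta$ yields both segments; since Fourier modes exhaust the bounded solutions of a constant-coefficient difference system, this gives both inclusions at once. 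Your algebra itself is all correct --- $F_0=0$, $F_1 = A^4 = (\omega-1)^2$, the palindromic collapse $P(\lambda)/\lambda^4 = w^2 - 4\omega w + 8\omega - 4$ with $w = \lambda^2+\lambda^{-2}$, and the ranges of $w$ and $g(w)$ over each component of the Lax spectrum --- and in fact it reproduces the consistency check the paper records in Remark \ref{remark-mod-stab}. But as written it establishes only the inclusion of the claimed set into the stability spectrum; to prove the lemma you must either add the paper's direct Fourier computation on (\ref{spec-stab}) or supply a completeness argument for the squared-eigenfunction map.
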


\begin{proof}
We solve the spectral problem (\ref{3.4}) with constant $U_n = A$ by using the band-limited Fourier transform
$$
P_n = \frac{1}{2\pi} \int_{0}^{2 \pi} \hat{P}(\theta) e^{i \theta n} d \theta,
\quad
Q_n = \frac{1}{2\pi} \int_{0}^{2 \pi} \hat{Q}(\theta) e^{i \theta n} d \theta.
$$
A nontrivial solution for $(\hat{P}(\theta),\hat{Q}(\theta))^{\mathrm{T}}$ exists if and only if the following characteristic equation is satisfied:
$$
\left| \begin{array}{cc} \sqrt{1+A^2} e^{i \theta} - \lambda & -A \\ A & \sqrt{1+A^2} e^{i \theta} - \lambda^{-1} \end{array} \right| = 0.
$$	
This yields
$$
\sqrt{1+A^2} e^{i \theta} \left[ 2 \sqrt{1+A^2} \cos(\theta) - \lambda - \lambda^{-1} \right] = 0,
$$
which is equivalent to $z := \lambda + \lambda^{-1} = 2 \sqrt{1+A^2} \cos(\theta) = 2 \sqrt{\omega} \cos(\theta)$. When $\theta = 0$, two roots of the quadratic equation $\lambda + \lambda^{-1} = 2 \sqrt{\omega}$ are given by  $\lambda = \sqrt{\omega} + \sqrt{\omega - 1}$ and $\lambda = \sqrt{\omega} - \sqrt{\omega - 1}$. When $\theta$ changes in the interval $[0,2\pi]$, the two roots of the quadratic equation cover the two segments on the real axis and the unit circle in $\mathbb{C}$.

The stability spectrum (\ref{spec-stab}) with constant $U_n = A$ is written in the form:
\begin{equation*}
\left\{ \begin{array}{l}
i \Lambda V_n - 2 \omega V_n + (1 + A^2)(V_{n+1} + V_{n-1}) +
2 A^2(V_n + \tilde{V}_n) = 0, \\
-i \Lambda \tilde{V}_n - 2 \omega \tilde{V}_n + (1 + A^2)(\tilde{V}_{n+1} + \tilde{V}_{n-1}) +
2 A^2(V_n + \tilde{V}_n) = 0.
\end{array}
\right.
\end{equation*}
Using the band-limited Fourier transform with parameter $2\theta$ instead of $\theta$, we obtain a non-trivial solution for $(\hat{V}(\theta),\hat{\tilde{V}}(\theta))^{\mathrm{T}}$ if and only if the following characterstic equation is satisfied:
\begin{align*}
\left| \begin{array}{cc} 2 (1+A^2) \cos(2 \theta) - 2 \omega + 2 A^2 + i \Lambda & 2A^2 \\ 2A^2 &  2 (1+A^2) \cos(2 \theta) - 2 \omega + 2 A^2 - i \Lambda \end{array} \right| = 0,
\end{align*}
which implies for $\Lambda = 2 \Omega$ that
\begin{align}
\label{Omega-explicit}
\Omega = \pm i \sqrt{((1+A^2) \cos(2\theta) + A^2 - \omega)^2 - A^4}.
\end{align}
Since $\omega = 1 + A^2$, we obtain that
$$
\Omega = \pm 2i \sqrt{\omega} \sin(\theta) \sqrt{1 - \omega \cos^2(\theta)}.
$$
When $\theta \in [0,\arccos(\omega^{-1/2})]$, the values of $\Omega$ cover the segment $[-(\omega-1),(\omega-1)]$ on the real axis, where the maximal value is found from the extremal value of $\theta \mapsto \sin(\theta) \sqrt{\omega \cos^2(\theta) -1}$ located at $\theta_0 = \frac{1}{2} \arccos(\omega^{-1})$, for which
$$
\sin \theta_0 = \frac{\sqrt{\omega - 1}}{\sqrt{2\omega}}, \qquad
\cos \theta_0 = \frac{\sqrt{\omega + 1}}{\sqrt{2\omega}}.
$$
When $\theta \in [\arccos(\omega^{-1/2}),\frac{\pi}{2}]$, the values of $\Omega$ cover the segment $[-2 i \sqrt{\omega},2 i \sqrt{\omega}]$ on the purely imaginary axis, where the maximal value is attained at $\theta = \frac{\pi}{2}$. This yields the result for $\Lambda$ since $\Lambda = 2 \Omega$.
\end{proof}

\begin{remark}
	The explicit expression (\ref{Omega-explicit}) for $\Omega$ is consistent with the representation (\ref{Omega}) for $F_1 = A^4 = (1-\omega)^2$ and $z = \lambda + \lambda^{-1} = 2 \sqrt{\omega} \cos(\theta)$ since
\begin{align*}
\Omega &= \pm \frac{i}{2} \sqrt{\lambda^4 - 4 \omega \lambda^2 + 2(1+2\omega^2 - 2F_1) - 4 \omega \lambda^{-2} + \lambda^{-4}} \\
&= \pm \frac{i}{2} \sqrt{z^4 - 4(1+\omega) z^2 + 4(1 + 2\omega + \omega^2 - F_1)} \\
&= \pm 2i \sqrt{\omega} \sqrt{\omega \cos^4(\theta) - (1+\omega) \cos^2(\theta)+ 1} \\
&= \pm 2i \sqrt{\omega} \sin(\theta) \sqrt{1 - \omega \cos^2(\theta)},
\end{align*}
which recovers (\ref{Omega-explicit}).
\label{remark-mod-stab}
\end{remark}

\begin{remark}
	\label{remark-Omega-0}
It follows from Lemma \ref{lem-constant-wave} that
the constant-amplitude wave (\ref{2.4}) is modulationally unstable with the maximal growth rate being $\Lambda_0 = 2(\omega - 1) = 2 A^2$. The maximal growth rate coincides exactly with the maximal growth rate $\Lambda_0 = 2$ of the constant-amplitude wave $\mathfrak{u}(X,T) = e^{2iT}$ of the continuous NLS equation (\ref{nls}), where the parameter $A > 0$ is included in the formal scaling $\epsilon = A$.
\end{remark}

\subsection{Dnoidal wave}

We set $U_n = A {\rm dn}(\alpha n;k)$ according to (\ref{2.3})
with $A = {\rm sn}(\alpha;k)/{\rm cn}(\alpha;k)$
and look for nontrivial solutions of the spectral problem (\ref{3.4}) for which
$\{ (P_n,Q_n)^{\mathrm{T}} \}_{n \in \mathbb{Z}}$ is a bounded sequence as $|n| \to \infty$. Using the equivalent spectral problem (\ref{lax-1-intro-3}) with real $U_n$, we rewrite the spectral problem in the symmetric form:
\begin{equation}
\label{Floquet-spectrum}
\left\{ \begin{array}{l}
\sqrt{1+U_n^2} P_{n+1} + \sqrt{1+U_{n-1}^2} P_{n-1} - (U_n - U_{n-1}) Q_n = z P_n, \\
\sqrt{1+U_n^2} Q_{n+1} + \sqrt{1+U_{n-1}^2} Q_{n-1} + (U_n - U_{n-1}) P_n = z Q_n,
\end{array} \right.
\end{equation}
where $z := \lambda + \lambda^{-1}$.

The dnoidal function ${\rm dn}(\xi;k)$ has the period of $2 K(k)$. If $\alpha = K(k)/M$ for integer $M$,
then $U_{n+2M} = U_n$ so that we can use the discrete Floquet theory and set
\begin{equation}
\label{eigenmode}
P_n = \hat{P}_n(\theta) e^{i \theta n}, \quad Q_n = \hat{Q}_n(\theta) e^{i \theta n},
\end{equation}
with periodic $\hat{P}_{n+2M}(\theta) = \hat{P}_n(\theta)$, $\hat{Q}_{n+2M}(\theta) = \hat{Q}_n(\theta)$
for fixed Floquet parameter $\theta \in [0,\pi/M]$. Solving the $4M \times 4M$
matrix eigenvalue problem numerically gives the spectrum of $4M$ eigenvalues $z$ which are then traced by changing $\theta$ in $[0,\pi/M]$.

The Lax (Floquet) spectrum of the spectral problem (\ref{Floquet-spectrum})
for $k = 0.8$ and $M = 20$ is shown in Figure \ref{fig-1} on the $\lambda$-plane. The admissible values of $\lambda$ are found from invertion of $z = \lambda + \lambda^{-1}$ with two roots
\begin{equation}
\label{square-roots}
\lambda = \frac{z \pm \sqrt{z^2-4}}{2}.
\end{equation}
The red dots on the $\lambda$-plane show zeros $\{ \pm \lambda_1,\pm \lambda_1^{-1}, \pm \lambda_2, \pm \lambda_2^{-1} \}$ of the polynomial $P(\lambda)$ in (\ref{polynomial-P}) with
\begin{equation}
\label{3-1-0}
\omega = \frac{{\rm dn}(\alpha;k)}{{\rm cn}^2(\alpha;k)} \quad
\mbox{\rm and} \quad F_1 = (1-k^2) \frac{{\rm sn}^4(\alpha;k)}{{\rm cn}^4(\alpha;k)}.
\end{equation}
Applying the ordering $0 < \lambda_1 < \lambda_2 < 1$, the two roots $\lambda_1$ and $\lambda_2$ were obtained in \cite{Chen-Pel-2022}
in the explicit form:
\begin{equation}
\label{3-1-1}
\lambda_1 = \frac{\sqrt{\left(1 - {\rm sn}(\alpha;k)\right) \left({\rm dn}(\alpha;k) - \sqrt{1-k^2} {\rm sn}(\alpha;k)\right)}}{{\rm cn}(\alpha;k)}, \quad \sigma_1 = -1
\end{equation}
and
\begin{equation}
\label{3-1-2}
\lambda_2 = \frac{\sqrt{\left(1 - {\rm sn}(\alpha;k)\right) \left({\rm dn}(\alpha;k) + \sqrt{1-k^2} {\rm sn}(\alpha;k) \right)}}{{\rm cn}(\alpha;k)}, \quad \sigma_1 = +1.
\end{equation}
The choice of $\sigma_1$ refers to the relation (\ref{omega-relation})
when either $\lambda_1$ or $\lambda_2$ are taken in place of $\lambda_1$.
The left panel suggests that the Lax spectrum consists of the unit circle and four segments on the real axis between the eight real roots of $P(\lambda)$. The right panel shows the zoomed version to indicate that the spectral bands are connected between the roots.

\begin{figure}[htbp!]
	\includegraphics[width=8cm,height=6cm]{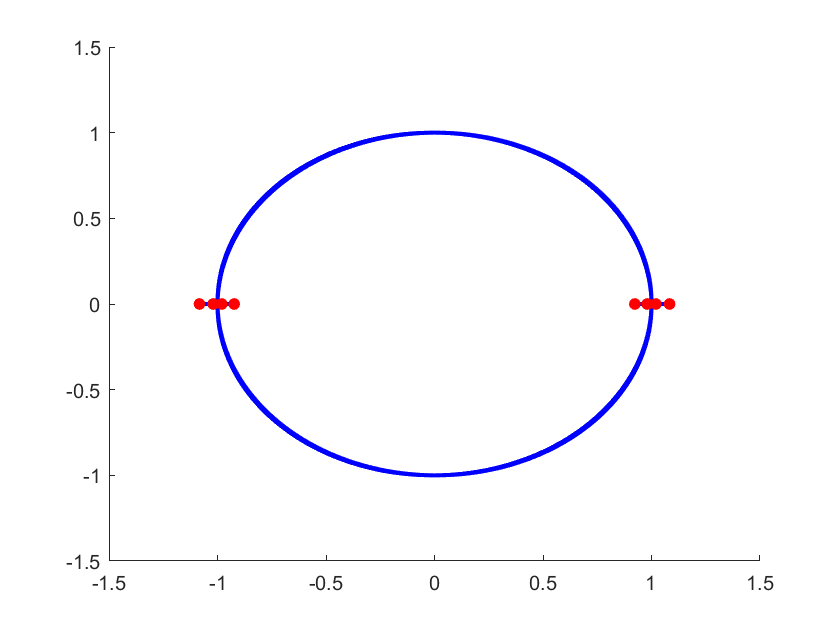}
	\includegraphics[width=8cm,height=6cm]{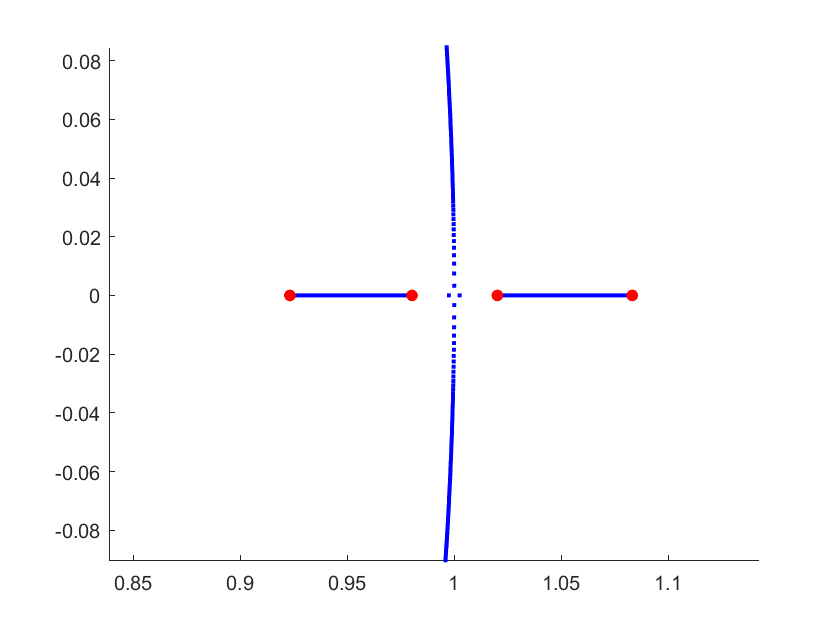}
	\caption{Lax spectrum (left) and its zoom (right) for the dnoidal wave with $\alpha = K(k)/M$ with $M = 20$ for $k = 0.8$. Red dots show roots of $P(\lambda)$.}
	\label{fig-1}
\end{figure}

The left panel of Figure \ref{fig-1a} shows the stability spectrum in the $\Omega$ plane, where we recall that $\Lambda$ in the spectral stability problem (\ref{spec-stab}) is given by $\Lambda = 2 \Omega$ in Corollary \ref{cor-squared-eigenfunction}. We relate the values of $\Omega$ and $z$ by
using (\ref{Omega})--(\ref{polynomial-P}) with the substituion $z^2 = \lambda^2 + \lambda^{-2} + 2$. This yields the explicit expression
\begin{equation}
\label{square-roots-Omega}
\Omega = \pm \frac{i}{2} \sqrt{z^4 - 4(1+\omega) z^2 + 4(1 + 2\omega + \omega^2 - F_1)}.
\end{equation}
The values of $z$ are obtained from the Lax spectrum computed numerically from the spectral problem (\ref{Floquet-spectrum}). Since the stability spectrum include the band on the positive real axis of $\Omega$, the dnoidal wave is
modulationally unstable similar to the constant-amplitude wave in Remark \ref{remark-mod-stab}.

The right panel of Figure \ref{fig-1a} shows $\Gamma := \Omega_0/\alpha^2$ versus $\alpha$, where $\Omega_0$ is the maximal positive $\Omega$ of the stability spectrum in the $\Omega$ plane. The numerically obtained values are shown by blue dots. The red dashed line shows the constant value $\sqrt{1-k^2}$
which is the maximal positive $\Omega$ of the dnoidal waves
in the continuous NLS equation (\ref{nls}) \cite{CPW}. As $k \to 0$,
this limiting value agrees with the value $\Omega_0 = \Lambda_0/2 = 1$ in Remark \ref{remark-Omega-0}.

\begin{figure}[htbp!]
	\includegraphics[width=8cm,height=6cm]{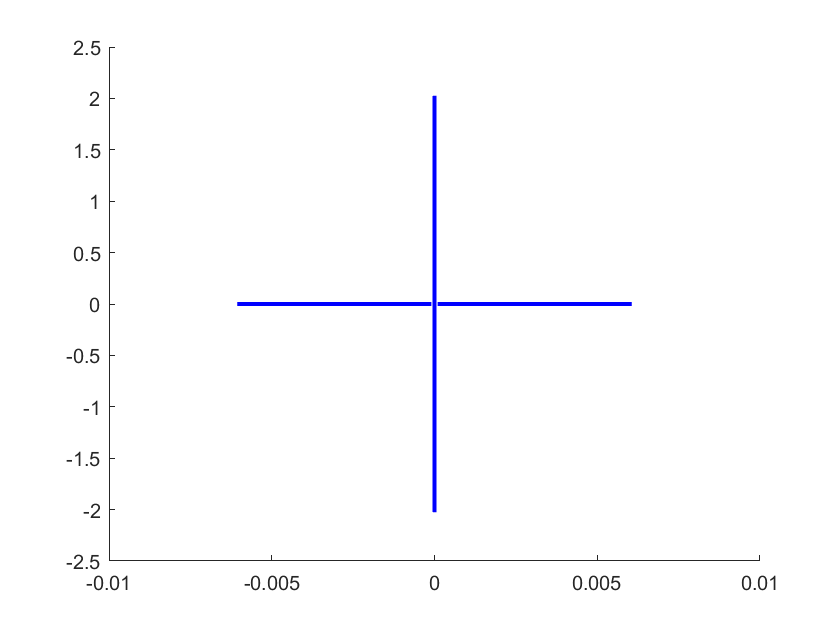}
	\includegraphics[width=8cm,height=6cm]{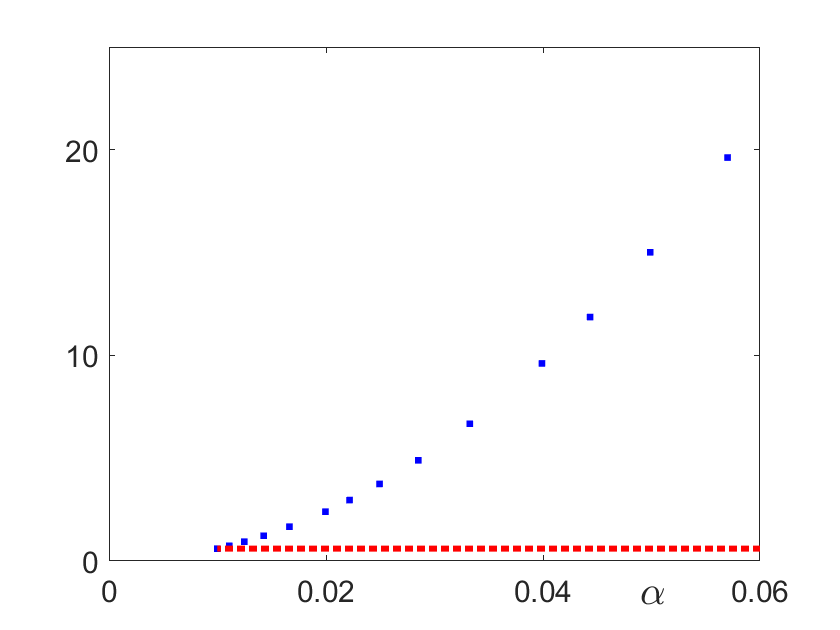}
	\caption{Left: Stability spectrum in the $\Omega$ plane for the dnoidal wave with $\alpha = K(k)/M$ with $M = 20$ for $k = 0.8$. Right: $\Omega_0/\alpha^2$ versus parameter $\alpha$ for fixed $k = 0.8$ (dots) and the constant value $\sqrt{1-k^2}$ (dotted line).}
	\label{fig-1a}
\end{figure}

\subsection{Cnoidal wave}

We set $U_n = A {\rm cn}(\alpha n;k)$ according to (\ref{2.6})
with $A = k {\rm sn}(\alpha;k)/{\rm dn}(\alpha;k)$
and look for bounded solutions of the spectral problem (\ref{Floquet-spectrum}) in the form (\ref{eigenmode}). Figure \ref{fig-2} show the Lax spectrum (left) and the stability spectrum (right) for the cnoidal wave with $k = 0.8$ (top) and $k = 0.95$ (bottom). The red dots on the $\lambda$-plane show again zeros $\{ \pm \lambda_1,
\pm \lambda_1^{-1}, \pm \bar{\lambda}_1,  \pm \bar{\lambda}_1^{-1} \}$ of the polynomial $P(\lambda)$ in (\ref{polynomial-P}) with
\begin{equation} \label{3-2-2-0}
\omega = \frac{{\rm cn}(\alpha;k)}{{\rm dn}^2(\alpha;k)}, \quad
 \quad F_1 = -k^2 (1-k^2) \frac{{\rm sn}^4(\alpha;k)}{{\rm dn}^4(\alpha;k)}.
\end{equation}
The complex value of $\lambda_1$ was obtained in \cite{Chen-Pel-2022} in the explicit form:
\begin{equation}\label{3-2-2}
\lambda_1 = \frac{\sqrt{(1 - k {\rm sn}(\alpha;k))({\rm cn}(\alpha;k) + i \sqrt{1-k^2} {\rm sn}(\alpha;k))}}{{\rm dn}(\alpha;k)}, \quad \sigma_1 = +1.
\end{equation}

\begin{figure}[htbp]
	\includegraphics[width=8cm,height=6cm]{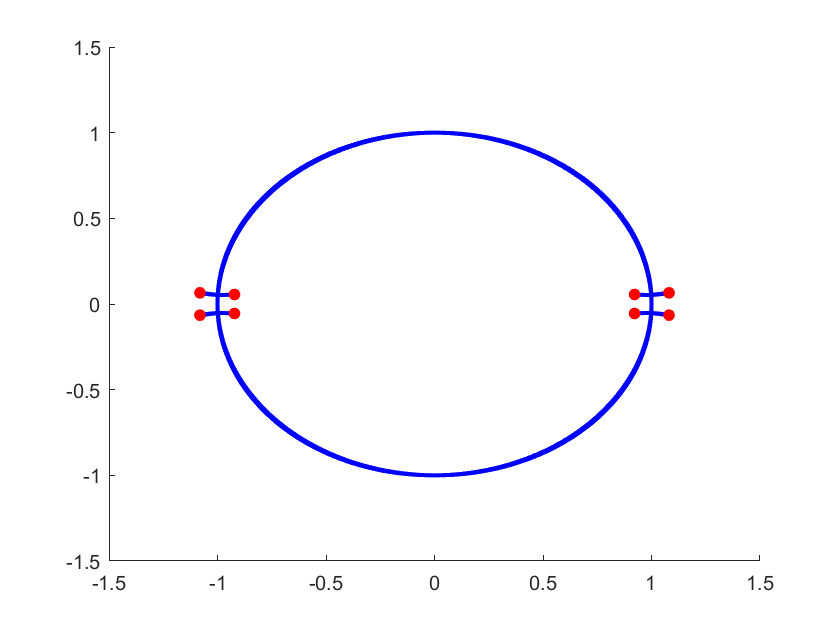}		\includegraphics[width=8cm,height=6cm]{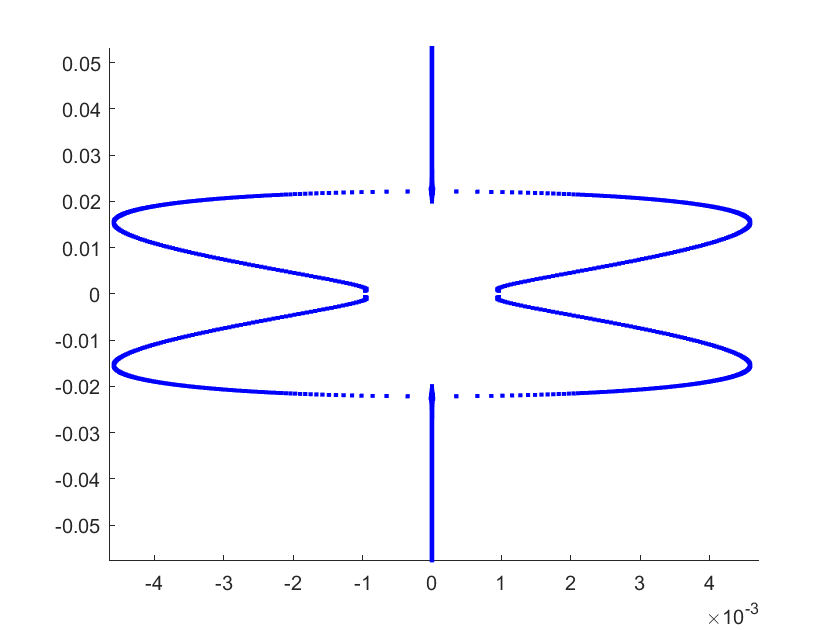}
	\includegraphics[width=8cm,height=6cm]{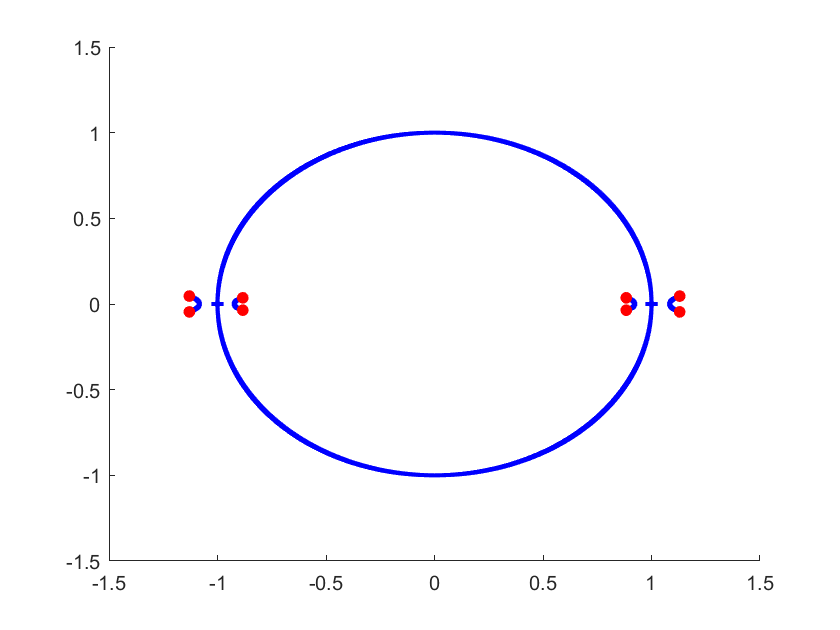}		\includegraphics[width=8cm,height=6cm]{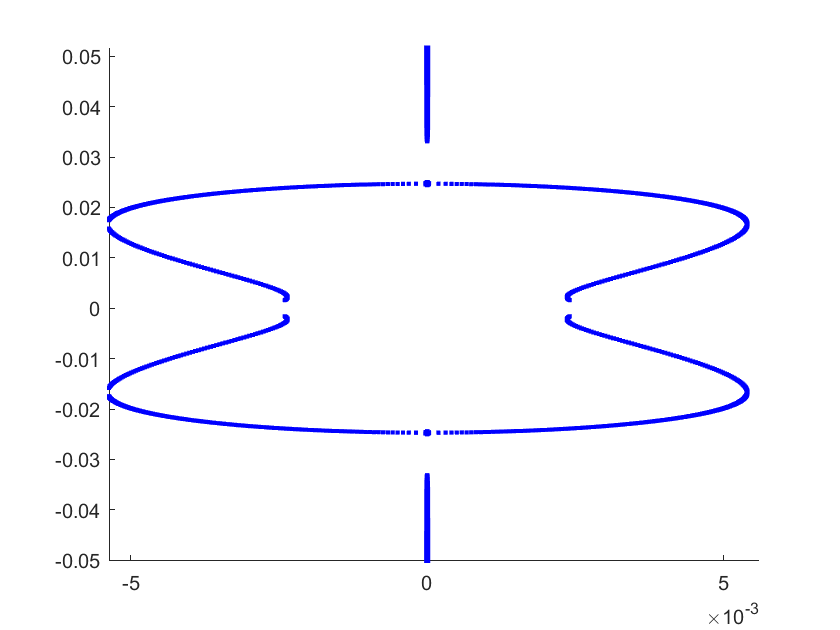}
	\caption{Lax spectrum (left) and stability spectrum on the $\Omega$ plane (right) for the cnoidal wave with $\alpha = K(k)/M$ with $M = 20$ for $k = 0.8$ (top) and $k = 0.95$ (bottom).}
	\label{fig-2}
\end{figure}

The numerical results suggest that the Lax spectrum consists of the unit circle and four complex bands which are either connected across the unit circle for $k = 0.8$ or connected away from the unit circle for $k = 0.95$. The stability spectrum in both cases displays the figure-eight instability bands in addition to two finite segments along the imaginary axis, boundary of which are beyond the margins of the right panels. The segments on the imaginary axis intersect the figure-eight bands for $k = 0.8$ and avoid intersection for $k = 0.95$. Because the figure-eight instability bands intersect the origin, the cnoidal wave is also declared to be modulationally unstable with a different instability pattern compared to the dnoidal wave.

We admit that the resolution of numerical data is poor near the origin on the right panels of Figure \ref{fig-2} because resolution of the Lax spectrum is poor near the points $\pm 1$ on the left panels. It is likely that sensitivity of numerical detected eigenvalues is related to evaluting the square roots in (\ref{square-roots}) and (\ref{square-roots-Omega}) near $z = \pm 2$.

\section{Nonperiodic solutions of linear equaitons}
\label{sec:6}

Since the dnoidal and cnoidal waves are modulationally unstable, as is shown in Section \ref{sec:4} based on numerical approximations of the Lax spectrum, we expect the existence of rogue waves (spatially and temporally localized solutions) on the modulationally unstable backgroud. Such solutions were obtained for the continuous NLS equation in \cite{CPnls,CPW,Feng} and for other related continuous equations in \cite{Chen-JNLS,Chen-DNLS,Ling,PW}. 

In order to obtain the rogue waves, as we do in Section \ref{sec:5} by using the analytical theory, we will consider here eigenfunctions of the linear system (\ref{3.4})--(\ref{3.5}) for the eigenvalue $\lambda = \lambda_1$ given by a root of the polynomial $P(\lambda)$ in (\ref{polynomial-P}). Since any root of $P(\lambda)$ is suitable, this notation for $\lambda_1$ is abstract and is not related to the particular choices in (\ref{3-1-1}), (\ref{3-1-2}), or (\ref{3-2-2}). 

According to the relations (\ref{eigenfunction}), the squared eigenfunctions 
corresponding to the eigenvalue $\lambda = \lambda_1$ are periodic. In addition to the periodic eigenfunctions, we construct here the second solution of the linear equations (\ref{3.4})--(\ref{3.5}) which are unbounded both in $n \in \mathbb{Z}$ and $t \in \mathbb{R}$.

The following lemma gives a construction of the second solution of the linear equations (\ref{3.4})--(\ref{3.5}) for an eigenvalue $\lambda = \lambda_1$.

\begin{lemma}
	\label{lem-second-solution}
	Let $\{ (P_n(t), Q_n(t) )^{\mathrm{T}} \}_{n \in \mathbb{Z}}$ be the eigenfunction of the linear system (\ref{3.4})--(\ref{3.5}) for an eigenvalue $\lambda = \lambda_1$. The second solution $\{ (\hat{P}_n(t), \hat{Q}_n(t) )^{\mathrm{T}} \}_{n \in \mathbb{Z}}$  of the linear system (\ref{3.4})--(\ref{3.5}) for the same eigenvalue $\lambda = \lambda_1$ can be represented in the form
	\begin{equation}
	\label{eigenfunction-second}
	\hat{P}_n(t) = P_n(t) \theta_n(t) - \frac{\bar{Q}_n(t)}{|P_n(t)|^2 + |Q_n(t)|^2}, \quad
\hat{Q}_n(t) = Q_n(t) \theta_n(t) + \frac{\bar{P}_n(t)}{|P_n(t)|^2 + |Q_n(t)|^2},
	\end{equation}
where $\theta_n(t)$ is a solution of the linear equations
\begin{equation}
\label{theta-1}
\theta_{n+1}-\theta_n = \frac{(|\lambda_1|^2 - 1) (\bar{\lambda}_1 U_n \bar{P}_n^2 - \lambda_1 \bar{U}_n \bar{Q}_n^2 - (1 + |\lambda_1|^2) \bar{P}_n \bar{Q}_n)}{(|P_n|^2 + |Q_n|^2) \Delta_n}
\end{equation}
and
\begin{align}
\frac{d \theta_n}{dt}  &= \frac{i (|\lambda_1|^2 - 1) \Sigma_n}{|\lambda_1|^2 (|P_n|^2 + |Q_n|^2)^2}
\label{theta-2}
\end{align}
with
\begin{align*}
\Delta_n &:= |\lambda_1|^4 |P_n|^2 + |Q_n|^2 +|\lambda_1|^2 |U_n|^2 (|P_n|^2 + |Q_n|^2) + (|\lambda_1|^2 - 1) ( \bar{\lambda}_1 U_n \bar{P}_n Q_n + \lambda_1 \bar{U}_n P_n \bar{Q}_n), \\
\Sigma_n & := (\lambda_1 U_n
+ \bar{\lambda}_1 U_{n-1}) \bar{P}_n^2 + (\bar{\lambda}_1 \bar{U}_n + \lambda_1 \bar{U}_{n-1}) \bar{Q}_n^2 - (1 + |\lambda_1|^{-2}) (\lambda_1^2 - \bar{\lambda}_1^2) \bar{P}_n\bar{Q}_n.
\end{align*}
\end{lemma}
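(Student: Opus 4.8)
The plan is to construct the second solution by reduction of order inside the orthogonal frame generated by the first solution, using conservation of the Wronskian to fix the normalization and two conjugation symmetries of the Lax matrices to reduce the vector equations to the scalar equations (\ref{theta-1}) and (\ref{theta-2}). Write $\varphi_n = (P_n,Q_n)^{\mathrm{T}}$, let $J := \begin{pmatrix} 0 & -1 \\ 1 & 0 \end{pmatrix}$, and introduce the companion vector $\chi_n := J\bar{\varphi}_n = (-\bar{Q}_n,\bar{P}_n)^{\mathrm{T}}$. It satisfies $\chi_n^{\dagger}\varphi_n = 0$, $|\chi_n|^2 = |P_n|^2+|Q_n|^2 =: |\varphi_n|^2$, and $\det[\varphi_n \,|\, \chi_n] = |\varphi_n|^2$, so $\{\varphi_n,\chi_n\}$ is an orthogonal frame of $\mathbb{C}^2$ at every $n$ with $\varphi_n\neq 0$. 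Denoting by $\mathcal{U}_n(\lambda)$ and $\mathcal{V}_n(\lambda)$ the coefficient matrices of (\ref{3.4}) and (\ref{3.5}), a short computation gives the conjugation symmetries $J\,\overline{\mathcal{U}_n(\lambda_1)}\,J^{-1} = \mathcal{U}_n(\bar{\lambda}_1^{-1})$ and $J\,\overline{\mathcal{V}_n(\lambda_1)}\,J^{-1} = \mathcal{V}_n(\bar{\lambda}_1^{-1})$, so that $\chi_n$ solves (\ref{3.4})--(\ref{3.5}) with $\lambda_1$ replaced by $\bar{\lambda}_1^{-1}$, that is $\chi_{n+1} = \mathcal{U}_n(\bar{\lambda}_1^{-1})\chi_n$ and $\dot{\chi}_n = \mathcal{V}_n(\bar{\lambda}_1^{-1})\chi_n$. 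I would seek the second solution in this frame as $\hat{\varphi}_n = \theta_n\varphi_n + c_n\psi_n$ with $\psi_n := \chi_n/|\varphi_n|^2$ and two scalar unknowns $\theta_n(t)$, $c_n(t)$.

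The normalization is settled first by the Wronskian $w_n := \det[\varphi_n \,|\, \hat{\varphi}_n]$. Since $\det[\varphi_n\,|\,\psi_n] = 1$, the ansatz gives $w_n = c_n$. Because $\det\mathcal{U}_n(\lambda_1) = 1$ and $\operatorname{tr}\mathcal{V}_n(\lambda_1) = 0$, the Wronskian is constant in both $n$ and $t$, whence $c_n$ is a constant that may be normalized to $c_n \equiv 1$, reproducing the form (\ref{eigenfunction-second}) and guaranteeing linear independence of $\hat\varphi_n$ from $\varphi_n$. Equivalently, projecting the substituted equations onto $\chi_{n+1}$ and onto $\chi_n$ forces $c_{n+1}=c_n$ and $\dot{c}_n = 0$ provided the two cancellation identities $\chi_{n+1}^{\dagger}D_n\chi_n = |\varphi_{n+1}|^2-|\varphi_n|^2$ (with $D_n := \mathcal{U}_n(\bar\lambda_1^{-1})-\mathcal{U}_n(\lambda_1)$ diagonal and proportional to $|\lambda_1|^2-1$) and $\tfrac{d}{dt}|\varphi_n|^2 = \chi_n^{\dagger}E_n\chi_n$ (with $E_n := \mathcal{V}_n(\bar\lambda_1^{-1})-\mathcal{V}_n(\lambda_1)$) hold; these identities are precisely the infinitesimal content of Wronskian conservation and are therefore automatic.

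With $c_n\equiv 1$ fixed, the content left is the component of each equation along $\varphi$. Inserting the ansatz into (\ref{3.4}) and using $\mathcal{U}_n(\lambda_1)\chi_n = \chi_{n+1}-D_n\chi_n$, the projection onto $\varphi_{n+1}$ gives $(\theta_{n+1}-\theta_n)|\varphi_{n+1}|^2 = -\varphi_{n+1}^{\dagger}D_n\chi_n/|\varphi_n|^2$; evaluating the right-hand side and identifying the denominator through the exact relation $\Delta_n = |\lambda_1|^2(1+|U_n|^2)|\varphi_{n+1}|^2$ reproduces (\ref{theta-1}). For (\ref{3.5}) the product rule yields $\dot{\theta}_n\varphi_n = \mathcal{V}_n(\lambda_1)\psi_n - \dot{\psi}_n = -E_n\chi_n/|\varphi_n|^2 + \chi_n\,(\tfrac{d}{dt}|\varphi_n|^2)/|\varphi_n|^4$, and the projection onto $\varphi_n$ gives $\dot{\theta}_n = -\varphi_n^{\dagger}E_n\chi_n/|\varphi_n|^4$, which reduces to (\ref{theta-2}); the numerator $\Sigma_n$ and the factors $\lambda_1^2-\bar\lambda_1^2$ and $1+|\lambda_1|^{-2}$ come from the non-Hermitian part of $\mathcal{V}_n(\lambda_1)$, controlled by $|\lambda_1|\neq 1$.

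Finally I would note that the two equations for $\theta_n(t)$ are mutually consistent, i.e. a single scalar $\theta_n(t)$ can satisfy both the difference equation (\ref{theta-1}) in $n$ and the differential equation (\ref{theta-2}) in $t$. This is automatic from compatibility of the Lax pair: since $\{U_n\}$ solves (\ref{2.2}), the systems (\ref{3.4}) and (\ref{3.5}) are compatible and their joint solution space for $\lambda = \lambda_1$ is two-dimensional, so a genuine second solution $\hat\varphi_n$ exists and its coefficient $\theta_n = \varphi_n^{\dagger}\hat\varphi_n/|\varphi_n|^2$ necessarily solves both. I expect the main obstacle to be computational rather than conceptual, namely verifying the two cancellation identities above and then matching the two $\varphi$-projections against the stated forms of $\Delta_n$ and $\Sigma_n$, with the time computation the heavier of the two because of the product rule on $\psi_n$ and the larger number of terms in $\mathcal{V}_n(\lambda_1)$.
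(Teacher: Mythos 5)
Your proof is correct, and it takes a genuinely different route from the paper's. The paper proceeds by direct verification: it substitutes the ansatz (\ref{eigenfunction-second}) into (\ref{3.4}) and (\ref{3.5}), uses the auxiliary identities (\ref{tech-eq-1}) and (\ref{tech-eq-2}) for $|P_{n+1}|^2+|Q_{n+1}|^2$ and $\frac{d}{dt}(|P_n|^2+|Q_n|^2)$, and extracts (\ref{theta-1}) and (\ref{theta-2}) from the first components of the resulting vector equations ``after simplifications''; the origin of the ansatz and the consistency of the second components are left implicit in those simplifications. You instead run a reduction of order in the frame $\{\varphi_n,\chi_n\}$ with $\chi_n=J\bar{\varphi}_n$, and the structural ingredients you rely on all check out: the conjugation symmetries $J\,\overline{\mathcal{U}_n(\lambda)}\,J^{-1}=\mathcal{U}_n(\bar{\lambda}^{-1})$ and $J\,\overline{\mathcal{V}_n(\lambda)}\,J^{-1}=\mathcal{V}_n(\bar{\lambda}^{-1})$ do hold for the coefficient matrices of (\ref{3.4})--(\ref{3.5}); your two cancellation identities are indeed automatic, because $\chi_n^{\dagger}b=\det[\varphi_n \,|\, b]$ for every $b\in\mathbb{C}^2$, so they are precisely the one-step discrepancy formulas for the non-conserved quantity $\det[\varphi_n\,|\,\chi_n]=|P_n|^2+|Q_n|^2$ that follow from $\det\mathcal{U}_n(\lambda_1)=1$ and $\operatorname{tr}\mathcal{V}_n(\lambda_1)=0$; and your key denominator relation $\Delta_n=|\lambda_1|^2(1+|U_n|^2)\bigl(|P_{n+1}|^2+|Q_{n+1}|^2\bigr)$ is exact (it is (\ref{tech-eq-1}) in expanded form). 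I have also verified that your two $\varphi$-projections reproduce (\ref{theta-1}) and (\ref{theta-2}) precisely, including the numerators $\bar{\lambda}_1U_n\bar{P}_n^2-\lambda_1\bar{U}_n\bar{Q}_n^2-(1+|\lambda_1|^2)\bar{P}_n\bar{Q}_n$ and $\Sigma_n$. What your route buys: it explains why the ansatz has the form (\ref{eigenfunction-second}) (the second term is the unit-Wronskian companion $\chi_n/(|P_n|^2+|Q_n|^2)$), it makes the prefactor $|\lambda_1|^2-1$ transparent (it measures the failure of $\chi_n$ to solve the system at $\lambda_1$ rather than $\bar{\lambda}_1^{-1}$, encoded in your $D_n$ and $E_n$), and, because you project each vector equation onto a basis, both components are accounted for systematically, which the paper's write-up does not do explicitly. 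What the paper's route buys is economy of apparatus --- no frame, no symmetry lemma, just substitution --- at the price of heavier unexplained algebra. One small ordering remark: normalizing $c_n\equiv 1$ by Wronskian conservation presupposes that a genuine second joint solution exists, which you justify only at the end via two-dimensionality of the joint solution space; stating that first (or noting that the $\chi$-projections close the system for the ansatz regardless) would make the logic read linearly, but this is presentation, not a gap.
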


\begin{proof}
	We obtain from (\ref{3.4}) that
	\begin{align}
	\label{tech-eq-1}
	|P_{n+1}|^2 + |Q_{n+1}|^2 = \frac{|\lambda_1 P_n + U_n Q_n|^2 + |\lambda_1^{-1} Q_n - \bar{U}_n P_n|^2}{1 + |U_n|^2}.
	\end{align}
	If $(P_n,Q_n)^{\mathrm{T}}$ and  $(\hat{P}_n,\hat{Q}_n)^{\mathrm{T}}$
	satisfy the linear equation (\ref{3.4}), then
	expression (\ref{eigenfunction-second}) implies that $\{ \theta_n \}_{n \in \mathbb{Z}}$ is a solution of
	\begin{align*}
	(\lambda_1 P_n + U_n Q_n) (\theta_{n+1} - \theta_n) =
	\frac{ U_n \bar{P}_n - \lambda_1 \bar{Q}_n}{|P_n|^2 + |Q_n|^2} -
\frac{ U_n \bar{P}_n - \bar{\lambda}_1^{-1} \bar{Q}_n}{|P_{n+1}|^2 + |Q_{n+1}|^2}.
	\end{align*}
Substituting (\ref{tech-eq-1}) and dividing by 	$(\lambda_1 P_n + U_n Q_n)$
yield (\ref{theta-1}) after simplifications.

We obtain from \eqref{3.5} that
\begin{align}
\frac{d}{d t} (|P_n|^2 + |Q_n|^2) &=  i (W_n - \bar{W}_n)
(|P_n|^2 - |Q_n|^2) \nonumber \\ &
\quad  + i \left[(\lambda_1 - \bar{\lambda}_1^{-1})U_n + (\bar{\lambda}_1 - {\lambda}_1^{-1})U_{n-1} \right] \bar{P}_n Q_n \nonumber \\
& \quad
 - i \left[ (\bar{\lambda}_1 - {\lambda}_1^{-1})\bar{U}_n + (\lambda_1 - \bar{\lambda}_1^{-1}) \bar{U}_{n-1} \right] P_n \bar{Q}_n.
\label{tech-eq-2}
\end{align}
If $(P_n,Q_n)^{\mathrm{T}}$ and  $(\hat{P}_n,\hat{Q}_n)^{\mathrm{T}}$
satisfy the linear equation (\ref{3.5}), then substituting \eqref{eigenfunction-second} and \eqref{tech-eq-2} into (\ref{3.5})
	and dividing by $P_n$ yield \eqref{theta-2} after simplifications.
\end{proof}

\begin{remark}
	\label{rem-reduction}
	The result of Lemma \ref{lem-second-solution} does not use the relations (\ref{eigenfunction}). In other words, $\lambda_1$ in Lemma \ref{lem-second-solution} does not have to be a root of $P(\lambda)$ in (\ref{polynomial-P}).
\end{remark}

If we use the relations (\ref{eigenfunction}) as in Remark \ref{rem-reduction},
then we can simplify the relations (\ref{theta-1}) and (\ref{theta-2}).
This is done separately for the case of dnoidal and cnoidal waves.
The following lemma summarizes the results of these computations.

\begin{lemma}
	Let $\{ U_n \}_{n \in \mathbb{Z}}$ be either the dnoidal or cnoidal waves given by (\ref{2.3}) or (\ref{2.6}) and $\lambda_1$ be a root of the polynomial $P(\lambda)$ in (\ref{polynomial-P}). Then, $\theta_n(t) = \Theta_n + i t$ in the representation (\ref{eigenfunction-second}) with $\{ \Theta_n\}_{n \in \mathbb{Z}}$ being a time-independent solution of the difference equations:
	\begin{equation}
	\label{theta-1-1}
	\theta_{n+1}-\theta_n = \frac{(\lambda_1 + {\lambda}_1^{-1}) (|U_n|^2 - \sigma_1 \sqrt{F_1})}{(\lambda_1 - \lambda_1^{-1}) (F_1 + 2(1 + \omega) |U_n|^2 + |U_n|^4)}
	\end{equation}
	or
	\begin{equation}
	\theta_{n+1}-\theta_n = \frac{(|\lambda_1|^2 - 1) (\bar{\lambda}_1 \lambda_1^{-1} + \bar{\lambda}_1^{-2}) |U_n|^2 + \sqrt{F_1}
		(|\lambda_1|^{2} - |\lambda_1|^{-2})}{ \Gamma_n},
	\label{theta-2-7}
	\end{equation}
	with
	\begin{align*}
	\Gamma_n
	&= |U_n|^2 \Big( |\lambda_1|^{4} + |\lambda_1|^{-4} + 2 |U_{n-1}|^2 - 2 |U_n|^2 \Big) +  2 |U_{n-1}|^2 \\
	& \quad + (|\lambda_1|^{2} + |\lambda_1|^{-2}) \Big( |U_n|^4 - F_1 - \bar{\lambda}_1 \lambda_1^{-1} \bar{U}_n U_{n-1} - \lambda_1 \bar{\lambda}_1^{-1} U_n \bar{U}_{n-1} \Big)
	\end{align*}
where (\ref{theta-1-1}) and (\ref{theta-2-7}) correspond to the dnoidal or cnoidal waves respectively.
	\label{lem-clear-expressions}
\end{lemma}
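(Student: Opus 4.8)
The plan is to substitute the closed-form relations (\ref{eigenfunction}) of Lemma \ref{lem-relation} (and their complex conjugates) into the formulas (\ref{theta-1})--(\ref{theta-2}) of Lemma \ref{lem-second-solution} and to collapse both right-hand sides to the stated rational expressions. The first point is that, since $\lambda_1$ is a root of $P(\lambda)$, the corresponding eigenvalue is $\Omega = 0$, so the eigenfunction $(P_n,Q_n)^{\mathrm{T}}$ is $t$-independent; hence every quantity entering (\ref{theta-1})--(\ref{theta-2}) is $t$-independent, and $\theta_n(t)$ splits as a $t$-independent spatial part $\Theta_n$ plus the time integral of the (constant) right-hand side of (\ref{theta-2}). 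The task thus reduces to two algebraic identities. For bookkeeping I would record that $P_n^2,Q_n^2,P_nQ_n$ and their conjugates $\bar P_n^2,\bar Q_n^2,\bar P_n\bar Q_n$ are explicit from (\ref{eigenfunction}), and reconstruct the remaining pieces via $|P_n|^2=|P_n^2|$, $|Q_n|^2=|Q_n^2|$, $P_n\bar Q_n = P_n^2\overline{P_nQ_n}/|P_n|^2$, and $\bar P_nQ_n = Q_n^2\overline{P_nQ_n}/|Q_n|^2$. Writing $R:=|P_n|^2$, $S:=|Q_n|^2$, the crucial observation is that even though $R,S$ are individually irrational in the cnoidal case, the combinations $R^2=P_n^2\bar P_n^2$, $S^2=Q_n^2\bar Q_n^2$, $RS=|P_nQ_n|^2$, and the ratio $S/R=RS/R^2$ are all rational in $U_n,U_{n-1},\lambda_1,\bar\lambda_1,\sqrt{F_1}$; this is what makes the final answers rational.

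For the time equation I would compute $\Sigma_n$, which is built only from the conjugated products and is therefore immediately explicit, together with $(R+S)^2=R^2+2RS+S^2$. The dnoidal case is transparent: using $|P_n|^2=-P_n^2$, $|Q_n|^2=-Q_n^2$ (the common sign of $P_n^2,Q_n^2$ being fixed by $|P_n|^2+|Q_n|^2>0$), one gets $\Sigma_n=\lambda_1(\lambda_1-\lambda_1^{-1})(U_n+U_{n-1})^2$ and $(R+S)^2=(\lambda_1-\lambda_1^{-1})^2(U_n+U_{n-1})^2$, so that $\frac{d\theta_n}{dt}=\frac{i(\lambda_1^2-1)\Sigma_n}{\lambda_1^2(R+S)^2}=i$ after using $\lambda_1(\lambda_1-\lambda_1^{-1})=\lambda_1^2-1$. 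The cnoidal case gives the same value $i$ through the corresponding rational identity $(|\lambda_1|^2-1)\Sigma_n=|\lambda_1|^2(R+S)^2$. Integrating the constant yields $\theta_n(t)=\Theta_n+it$.

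For the spatial equation I would first simplify the numerator of (\ref{theta-1}), which depends only on the conjugated products and reduces, after eliminating $\sqrt{F_1}$ through $\omega$ in (\ref{omega-relation}), to a multiple of the numerator factor in (\ref{theta-1-1})/(\ref{theta-2-7}); in the dnoidal case it equals $(\lambda_1^4-1)(U_n^2-\sigma_1\sqrt{F_1})$. I would then expand the denominator $(R+S)\Delta_n$ using the rationality of $R^2,S^2,RS,S/R$ noted above, factor out $(\lambda_1^2-1)$ (resp. $(|\lambda_1|^2-1)$), and substitute $\omega$ from (\ref{omega-relation}); the $U_{n-1}$-dependence then assembles exactly into the combination $2\omega U_nU_{n-1}-U_n^2-U_{n-1}^2-U_n^2U_{n-1}^2$, which is the constant $F_1$ by (\ref{2.2-2}) with $F_0=0$ from (\ref{2.2-1}). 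In the dnoidal case this gives $(R+S)\Delta_n=(\lambda_1^2-1)^2\big(F_1+2(1+\omega)U_n^2+U_n^4\big)$, and cancelling the common factor between numerator and denominator produces precisely (\ref{theta-1-1}). The cnoidal identity $(R+S)\Delta_n=|\lambda_1|^2\Gamma_n$ is established the same way, the common factor being $|\lambda_1|^2$, which matches the factor relating the numerator of (\ref{theta-1}) to that of (\ref{theta-2-7}) since $\sigma_1=+1$ there.

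The hard part will be the cnoidal reduction. Here $\lambda_1$ is genuinely complex and $\sqrt{F_1}$ is purely imaginary, so $R$ and $S$ carry square-root irrationalities and the mixed products $P_n\bar Q_n,\bar P_nQ_n$ cannot be collapsed to $\pm P_nQ_n$ as in the dnoidal case. One must therefore track the cancellation of these irrationalities through the rational combinations $R^2,S^2,RS,S/R$ and verify that the $U_{n-1}$-terms fold into $F_1$ while the remaining constants organize into $\Gamma_n$. This is the step where the explicit value of $\lambda_1$ in (\ref{3-2-2}) enters, via the elliptic identities $|\lambda_1|^2+|\lambda_1|^{-2}=2/{\rm dn}(\alpha;k)$ and ${\rm Re}(\lambda_1^2)/|\lambda_1|^2={\rm cn}(\alpha;k)/{\rm dn}(\alpha;k)$, which are what ultimately force the radical terms to cancel and leave the rational expression $\Gamma_n$.
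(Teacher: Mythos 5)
Your proposal is correct and follows essentially the same route as the paper's proof: substitute the periodic relations (\ref{eigenfunction}) into (\ref{theta-1})--(\ref{theta-2}), use $F_0=0$, (\ref{2.2}), (\ref{2.2-2}) and (\ref{omega-relation}) to collapse the right-hand side of (\ref{theta-2}) to $i$ (your identities $(|\lambda_1|^2-1)\Sigma_n=|\lambda_1|^2\bigl(|P_n|^2+|Q_n|^2\bigr)^2$ and, in the dnoidal case, $\Sigma_n=\lambda_1(\lambda_1-\lambda_1^{-1})(U_n+U_{n-1})^2$ are exactly the ones established in the paper), and then reduce (\ref{theta-1}) by the same direct algebra, your stated numerator factorization $(\lambda_1^4-1)(U_n^2-\sigma_1\sqrt{F_1})$ and denominator identity $(|P_n|^2+|Q_n|^2)\Delta_n=(\lambda_1^2-1)^2\bigl(F_1+2(1+\omega)U_n^2+U_n^4\bigr)$ being correct. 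The only cosmetic difference is that the paper treats the cnoidal case purely algebraically via the relations $\lambda_1^2+\lambda_1^{-2}+\bar{\lambda}_1^2+\bar{\lambda}_1^{-2}=4\omega$ and $\lambda_1^2+\lambda_1^{-2}-\bar{\lambda}_1^2-\bar{\lambda}_1^{-2}=-4\sqrt{F_1}$ deduced from (\ref{omega-relation}), so the explicit elliptic identities you invoke from (\ref{3-2-2}) are not actually needed.
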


\begin{proof}
For the cnoidal wave, we assume that $\lambda_1 \in \mathbb{C}$ is given by (\ref{3-2-2}), whereas $\omega > 0$ and $F_1 < 0$ are given by (\ref{3-2-2-0}). It follows from \eqref{omega-relation} with $\sigma_1 = 1$ and $F_1 < 0$ that
	\begin{equation}\begin{array}{l}
	\lambda_1^2 + \lambda_1^{-2} + \bar{\lambda}_1^2 + \bar{\lambda}_1^{-2} = 4 \omega,\\
	\lambda_1^2 + \lambda_1^{-2} - \bar{\lambda}_1^2 - \bar{\lambda}_1^{-2} = - 4 \sqrt{F_1}.\\
	\end{array}
	\label{theta-2-6}
	\end{equation}
It follows from  \eqref{eigenfunction} with $\sigma_1 = 1$, $F_0 = 0$, and $F_1 < 0$ that
\begin{align*}
(|P_n|^2 + |Q_n|^2)^2 &= |\lambda_1 U_n - \lambda_1^{-1} U_{n-1}|^2 + |\lambda_1 \bar{U}_{n-1} - \lambda_1^{-1} \bar{U}_n|^2 + \frac12 (U_n \bar{U}_{n-1} + \bar{U}_n U_{n-1})^2 - 2 F_1 \\
 &= (|\lambda_1|^2 + |\lambda_1|^{-2}) (|U_n|^2 + |U_{n-1}|^2) - 2\lambda_1^{-1} \bar{\lambda}_1 \bar{U}_n U_{n-1} - 2 \lambda_1 \bar{\lambda}_1^{-1} U_n \bar{U}_{n-1} \\ &
\quad + 2 |U_n|^2 |U_{n-1}|^2 - 2 F_1.
\end{align*}
Similarly, we obtain from (\ref{eigenfunction}) that
\begin{align*}
(1-|\lambda_1|^{-2}) \Sigma_n  &= (|\lambda_1|^2 - 2 + |\lambda_1|^{-2})
(|U_n|^2 + |U_{n-1}|^2) +2 \bar{\lambda}_1 (\bar{\lambda}_1 - \lambda_1^{-1}) \bar{U}_n U_{n-1} \\
& \quad  + 2 \bar{\lambda}_1^{-1} (\bar{\lambda}_1^{-1} - \lambda_1) U_n \bar{U}_{n-1} + \frac{1}{2} (\lambda_1^2 + \lambda_1^{-2} - \bar{\lambda}_1^2 - \bar{\lambda}_1^{-2}) (U_n \bar{U}_{n-1} + \bar{U}_n U_{n-1} + 2 \sqrt{F_1}),
\end{align*}
where we have used that $\sqrt{F_1} \in i \mathbb{R}$. By using $F_0 = 0$ and (\ref{theta-2-6}), we can simplify the previous expression to the form:
\begin{align*}
(1-|\lambda_1|^{-2}) \Sigma_n  &= (|\lambda_1|^2 - 2 + |\lambda_1|^{-2})
(|U_n|^2 + |U_{n-1}|^2) -2 \bar{\lambda}_1 \lambda_1^{-1} \bar{U}_n U_{n-1} - 2 \bar{\lambda}_1^{-1} \lambda_1 U_n \bar{U}_{n-1} \\
& \quad + \frac{1}{2} (\lambda_1^2 + \lambda_1^{-2} - \bar{\lambda}_1^2 - \bar{\lambda}_1^{-2}) (U_n \bar{U}_{n-1} + \bar{U}_n U_{n-1} + 2 \sqrt{F_1}) \\
&= (|\lambda_1|^2 - 2 + |\lambda_1|^{-2})
(|U_n|^2 + |U_{n-1}|^2) -2 \bar{\lambda}_1 \lambda_1^{-1} \bar{U}_n U_{n-1} - 2 \bar{\lambda}_1^{-1} \lambda_1 U_n \bar{U}_{n-1} \\
& \quad + 2 \omega (U_n \bar{U}_{n-1} + \bar{U}_n U_{n-1}) - 4 F_1.
\end{align*}
Due to the conservation (\ref{2.2-2}), this implies that
$(1-|\lambda_1|^{-2}) \Sigma_n = (|P_n|^2 + |Q_n|^2)^2$ and hence
(\ref{theta-2}) becomes a trivial equation $\dot{\theta}_n = i$,
with the solution $\theta_n(t) = \Theta_n + i t$, where $\Theta_n$ is $t$-independent. To get $\{ \Theta_n \}_{n \in \mathbb{Z}}$, we substitute \eqref{eigenfunction} into \eqref{theta-1} and simplify the result with a lengthy but direct computation to the form (\ref{theta-2-7}), where we have used that $\sigma_1 = 1$, $F_0 = 0$, and $\sqrt{F_1} \in i \mathbb{R}$.

For the dnoidal wave, we assume that $\lambda_1 \in \mathbb{R}$ is given by either (\ref{3-1-1}) or (\ref{3-1-2}), whereas $\omega > 0$ and $F_1 > 0$ are given by (\ref{3-1-0}). Without loss of generality, we consider (\ref{3-1-1}) with $\sigma_1 = -1$. It follows from  \eqref{eigenfunction} with $\sigma_1 = -1$ that
\begin{align*}
(|P_n|^2 + |Q_n|^2)^2 &= ({\lambda}_1^2 + {\lambda}_1^{-2}) (|U_n|^2 + |U_{n-1}|^2) + \frac12
(U_n \bar{U}_{n-1} + \bar{U}_n U_{n-1} + 2 \sqrt{F_1})^2 \\ & \quad \quad 
- 2 (U_n \bar{U}_{n-1} + \bar{U}_n U_{n-1}) \\
&= (\lambda_1^2 + \lambda_1^{-2} - 2) (\bar{U}_n U_{n-1} + U_n \bar{U}_{n-1}
+ |U_n|^2 + |U_{n-1}|^2)
\end{align*}
and
\begin{align*}
(1- \lambda_1^{-2}) \Sigma_n &= ({\lambda}_1^2 + \lambda_1^{-2} - 2) (|U_n|^2 + |U_{n-1}|^2 ) + 2( \lambda_1^2 - 1) \bar{U}_n U_{n-1} - 2 (1 - \lambda_1^{-2}) U_n \bar{U}_{n-1} \\
&= (\lambda_1^2 + \lambda_1^{-2} - 2) (\bar{U}_n U_{n-1} + U_n \bar{U}_{n-1}
+ |U_n|^2 + |U_{n-1}|^2),
\end{align*}
where we have used \eqref{2.2-1} and \eqref{2.2-2} with $F_0 = 0$ and $F_1 > 0$ and (\ref{omega-relation}) with $\sigma_1 = -1$. Thus, we have $(1-\lambda_1^{-2}) \Sigma = (|P_n|^2 + |Q_n|^2)^2$ so that
(\ref{theta-2}) becomes again a trivial equation $\dot{\theta}_n = i$. Again, we have $\theta_n(t) = \Theta_n + i t$, where the $t$-independent $\Theta_n$ is obtained from the difference equation \eqref{theta-1}. After long but straightforward computations, we have simplified the expression to the form (\ref{theta-1-1}), where we have used (\ref{omega-relation}) and restored the value of $\sigma_1$ for either (\ref{3-1-1}) or (\ref{3-1-2}).
\end{proof}

\section{Construction of rogue waves}
\label{sec:5}

Here we construct the rogue waves on the background of the standing periodic waves. To do so, we use the one-fold Darboux transformation ($1$-fold DT).
The $1$-fold DT has already been constructed for the AL equation (\ref{al})
in \cite{LiuZeng} but the formulas used eigenfunctions of the Lax system
(\ref{lax-1-intro})--(\ref{lax-2-intro}), which is not suitable for the standing periodic waves as in Remark \ref{rem-lax}. Therefore, our first task is to extend the $1$-fold DT to the eigenfunctions of the Lax system (\ref{lax-intro}). We achieve the task with direct computations similarly to
computations in \cite{XuPelin} and \cite{Chen-Pel-2022}.

The following lemma presents the $1$-fold DT for the AL equation (\ref{al}) in terms of solutions of the linear system (\ref{lax-1-intro}).

\begin{lemma}
	\label{lem-one-fold}
	Let $\{ u_n(t)\}_{n \in \mathbb{Z}}$ be a solution of the AL equation \eqref{al}, $\{ (p_n(t), q_n(t))^\mathrm{T} \}_{n \in \mathbb{Z}}$ be a nontrivial solution to the linear system \eqref{lax-intro} with $\lambda = \lambda_1$, and $\{ \varphi_n(t) \}_{n \in \mathbb{Z}}$ be any solution to the linear system \eqref{lax-intro} with arbitrary $\lambda \in \mathbb{C}$. Then, $\{ \hat{u}_n(t)\}_{n \in \mathbb{Z}}$ given by 
	\begin{equation}\label{5.1}
	\hat{u}_n = - \frac{\lambda_1(|p_n|^2 + |\lambda_1|^2 |q_n|^2)}{\bar{\lambda}_1(|\lambda_1|^2 |p_n|^2 + |q_n|^2)} u_n + \frac{\lambda_1 (1 - |\lambda_1|^4) p_n \bar{q}_n}{\bar{\lambda}_1^2 (|\lambda_1|^2 |p_n|^2 + |q_n|^2)}
	\end{equation}
	is a new solution of the AL equation \eqref{al} and $\{ \hat{\varphi}_n(t) \}_{n \in \mathbb{Z}}$ given by 
	$\hat{\varphi}_n = M_n(\lambda) \varphi_n$ is a new solution to the linear system 
	\eqref{lax-intro} with arbitrary $\lambda$, where
	\begin{equation*}
	M_n(\lambda) = \frac{\sqrt{|p_n|^2 + |\lambda_1|^2 |q_n|^2}}{\sqrt{|\lambda_1|^2 |p_n|^2 + |q_n|^2}} \left(\begin{array}{cc} \lambda + \lambda^{-1} a_n & b_n \\
	-\bar{b}_n & \lambda \bar{a}_n + \lambda^{-1} \end{array} \right)
	\end{equation*}
	with
	\begin{equation*}
	a_n = - \frac{\lambda_1(|\lambda_1|^2 |p_n|^2 + |q_n|^2)}{\bar{\lambda}_1(|p_n|^2 + |\lambda_1|^2 |q_n|^2)}, \qquad b_n =
	\frac{\lambda_1(1 - |\lambda_1|^4) p_n \bar{q}_n}{|\lambda_1|^2 (|p_n|^2 + |\lambda_1|^2 |q_n|^2)}.
	\end{equation*}
\end{lemma}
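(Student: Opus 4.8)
The statement is equivalent to two intertwining (covariance) identities for the Darboux matrix $M_n(\lambda)$, namely
\begin{gather*}
M_{n+1}(\lambda)\, U(u_n,\lambda) = U(\hat u_n,\lambda)\, M_n(\lambda), \\
\dot M_n(\lambda) + M_n(\lambda)\, V(u_n,\lambda) = V(\hat u_n,\lambda)\, M_n(\lambda),
\end{gather*}
required to hold for every $\lambda \in \mathbb{C}$. If both hold, then $\hat\varphi_n = M_n(\lambda)\varphi_n$ satisfies the spatial and temporal parts of (\ref{lax-intro}) with the potential $\hat u_n$. Since $U(u_n,\lambda)$ and $M_n(\lambda)$ are each first-order Laurent polynomials in $\lambda$, both sides of each identity are Laurent polynomials ranging over $\lambda^{\pm2}$, $\lambda^{\pm1}$, $\lambda^{0}$, so the plan is to verify each identity by matching the coefficients of the individual powers of $\lambda$.

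First I would record where $M_n(\lambda)$ comes from, since this explains the precise form of $a_n$, $b_n$, and the scalar prefactor. The linear system (\ref{lax-intro}) possesses the symmetry $S\,\overline{U(u_n,\lambda)}\,S^{-1} = U(u_n,\bar\lambda^{-1})$ with $S = \left(\begin{smallmatrix} 0 & -1 \\ 1 & 0 \end{smallmatrix}\right)$, so that $\{(-\bar q_n,\bar p_n)^{\mathrm T}\}$ solves (\ref{lax-intro}) with $\lambda = \bar\lambda_1^{-1}$ whenever $\{(p_n,q_n)^{\mathrm T}\}$ solves it with $\lambda=\lambda_1$. One then imposes the two kernel conditions $M_n(\lambda_1)(p_n,q_n)^{\mathrm T} = 0$ and $M_n(\bar\lambda_1^{-1})(-\bar q_n,\bar p_n)^{\mathrm T} = 0$; the chosen form of $M_n(\lambda)$ is exactly the one obeying $S\,\overline{M_n(\bar\lambda^{-1})}\,S^{-1} = M_n(\lambda)$, which renders the second condition the complex conjugate of the first. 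Solving the two scalar equations from $M_n(\lambda_1)(p_n,q_n)^{\mathrm T}=0$ for the entries then yields the stated $a_n$ and $b_n$, with $c_n$ the real prefactor to be fixed below.

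Next I would verify the spatial identity. It is convenient to note that $\det U(u_n,\lambda)=1$ identically. Matching the coefficients of $\lambda^{\pm2}$ then yields $c_{n+1}/c_n = \sqrt{(1+|u_n|^2)/(1+|\hat u_n|^2)}$, consistent with the stated real prefactor $c_n$, while the coefficients of $\lambda^{\pm1}$ and $\lambda^{0}$ produce, after eliminating $q_n/p_n$ through the kernel relations and substituting the explicit $a_n$, $b_n$, precisely the B\"{a}cklund-type formula (\ref{5.1}) for $\hat u_n$ together with its complex conjugate. This step is algebraically routine.

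The laborious and decisive step is the temporal identity, which I expect to be the main obstacle. Here one must compute $\dot M_n(\lambda)$, which requires $\dot a_n$ and $\dot b_n$ and hence the time derivatives $\dot p_n,\dot q_n$ supplied by the temporal part of (\ref{lax-intro}) at $\lambda=\lambda_1$; these introduce the neighbouring value $u_{n-1}$ through $V$. After substituting and collecting the coefficients of $\lambda^{\pm2},\lambda^{\pm1},\lambda^{0}$, the two sides agree only after invoking the AL equation (\ref{al}) for $u_n$ to eliminate $\dot u_n$ and to reconcile the $u_{n\pm1}$ terms; the bookkeeping of these cancellations, rather than any conceptual difficulty, is the real work. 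Once both intertwining identities are established, the transformed eigenfunction $\hat\varphi_n = M_n(\lambda)\varphi_n$ solves (\ref{lax-intro}) with $\hat u_n$ as in (\ref{5.1}), and compatibility of the transformed Lax pair shows that $\hat u_n$ solves the AL equation (\ref{al}), completing the proof.
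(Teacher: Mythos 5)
Your proposal follows essentially the same route as the paper: the paper's proof consists precisely of verifying the two intertwining identities (its Darboux equations for $U$ and $V$) by collecting powers of $\lambda$, reducing each to a scalar system, and checking those systems using the linear equations satisfied by $(p_n,q_n)^{\mathrm T}$ at $\lambda=\lambda_1$, with the temporal bookkeeping finished by symbolic computation. One minor correction to your outline: no $\dot u_n$ ever appears in the temporal identity---$M_n(\lambda)$ and $V$ involve only $u_n$, $u_{n-1}$ and the eigenfunctions---so the AL equation for $u_n$ enters only through the existence of the joint solution $(p_n,q_n)^{\mathrm T}$ of both linear equations, not as an explicit algebraic substitution.
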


\begin{proof}
	We need to show validity of the Darboux equations
	\begin{equation}
	\label{5.4}
	{U}(\hat{u}_n, \lambda) M_n(\lambda) = M_{n+1}(\lambda) {U}(u_n,\lambda)
	\end{equation}
	and
	\begin{equation}
	\label{5.5}
	V(\hat{u}_n, \lambda) M_n(\lambda) = \dot{M}_n(\lambda) + M_n(\lambda) V(u_n,\lambda).
	\end{equation}
	Substituting $U(u_n,\lambda)$ and $M_n(\lambda)$ into \eqref{5.4} and collecting different powers with respect to $\lambda$ yields the system of equations
	\begin{equation}
	\label{5.6}
	\left\{ \begin{array}{l}
	\bar{a}_n \hat{u}_n + b_n - u_n = 0, \\
	\hat{u}_n - a_{n+1} u_n - b_{n+1} = 0, \\
	a_{n+1} - a_n - b_{n+1} \bar{u}_n + \bar{b}_n \hat{u}_n = 0, \\
	a_{n+1} (1+u_n^2) - a_n (1+ \hat{u}_n^2) = 0.  \end{array} \right.
	\end{equation}
It follows from $\varphi_{n+1} = U(u_n,\lambda) \varphi_n$ that
	\begin{align*}
	|p_{n+1}|^2 + |\lambda_1|^2 |q_{n+1}|^2 = |\lambda_1|^2 |p_n|^2 + |q_n|^2,
	\end{align*}
	\begin{align*}
	(1 + |u_n|^2) (|\lambda_1|^2 |p_{n+1}|^2 + |q_{n+1}|^2) &= 
	(|\lambda_1|^2 + |u_n|^2) |p_n|^2 +
	u_n \bar{p}_n q_n (\bar{\lambda}_1 |\lambda_1|^2 - \lambda_1^{-1})\\
	& \quad + \bar{u}_n p_n \bar{q}_n (\lambda_1 |\lambda_1|^2 - \bar{\lambda}_1^{-1}) +
	(|\lambda_1|^2 |u_n|^2 + |\lambda_1|^{-2})|q_n|^2,
	\end{align*}
	and
	\begin{align*}
	(1 + |u_n|^2) p_{n+1} \bar{q}_{n+1} & = - \lambda_1 u_n |p_n|^2 + \lambda_1 \bar{\lambda}_1^{-1} p_n \bar{q}_n
	-u_n^2 \bar{p}_n q_n + \bar{\lambda}_1^{-1} u_n |q_n|^2.
	\end{align*}
By using the previous expressions and the definitions of $a_n$ and $b_n$, we have verified that the first two equations in system \eqref{5.6} are equivalent to each other and yield the transformation formula \eqref{5.1}. The third equation in system \eqref{5.6} transforms to the relation
	$$
	|a_n|^2 + |b_n|^2 - \bar{a}_n a_{n+1} + \bar{u}_n \bar{a}_n b_{n+1} - u_n \bar{b}_n=0,
	$$
	which holds true after straightforward computations. Finally, the fourth equation in system \eqref{5.6} is consistent with the first three equations after substitutions.
	
	Substituting $V(u_n,\lambda)$ and $M_n(\lambda)$ into \eqref{5.5} and collecting different powers with respect to $\lambda$ yields the system of equations
	\begin{equation}
	\label{5.8}
	\left\{ \begin{array}{l}
	\bar{a}_n \hat{u}_n + b_n - u_n = 0, \\
	\hat{u}_{n-1} - a_{n} u_{n-1} - b_{n} = 0, \\
	\dot{a}_n - i a_{n} \left( u_n \bar{u}_{n-1} + \bar{u}_n u_{n-1} - \hat{u}_n \bar{\hat{u}}_{n-1} -  \bar{\hat{u}}_n \hat{u}_{n-1} - 2 b_n \bar{u}_{n-1} + 2 \bar{b}_n \hat{u}_n \right)=0,\\
	\dot{a}_n - i \left[ a_n (\hat{u}_n \bar{\hat{u}}_{n-1} +  \bar{\hat{u}}_n \hat{u}_{n-1} -
	u_n \bar{u}_{n-1} - \bar{u}_n u_{n-1}) + 2 \bar{b}_n \hat{u}_{n-1} - 2 b_n \bar{u}_n \right]=0,\\
	\dot{b}_n - i \left[ (1+|b_n|^2) \hat{u}_n + b_n (u_n \bar{u}_{n-1} + \bar{u}_n u_{n-1})
	- b_n^2 \bar{u}_{n-1} - \bar{a}_n \hat{u}_{n-1} - a_n u_n + u_{n-1}\right] =0.
	\end{array} \right.
	\end{equation}
	It is obvious that the first two equations in system \eqref{5.8} repeat the first two equations in system \eqref{5.6}, while the third and the fourth equations in system \eqref{5.8} are identical to each other. The third and fifth equations in system \eqref{5.8} are further reduced to the form:
	$$
	\dot{a}_n = i \left[\lambda_1^2 \bar{\lambda}_1^{-2} \bar{b}_n (u_n - b_n) + a_n (u_{n-1} \bar{b}_n - b_n \bar{u}_{n-1}) - b_n (\bar{u}_n - \bar{b}_n)\right],
	$$
	and
\begin{align*}
	\dot{b}_n & = i [ \bar{a}_n^{-1} ( 1+ |b_n|^2) (u_n - b_n) + b_n (u_n \bar{u}_{n-1} + \bar{u}_n u_{n-1}) \\
	& \quad 
	-(1 + |a_n|^2)u_{n-1} - b_n^2 \bar{u}_{n-1} - a_n u_n - \bar{a}_n b_n ].
\end{align*}
It follows from $\dot{\varphi}_n = V(u_n,\lambda) \varphi_n$ that
	\begin{align*}
	\frac{d}{d t} (|\lambda_1|^2 |p_n|^2 + |q_n|^2) &= \frac{i}{2} (\lambda_1^2 + \lambda_1^{-2} -
	\bar{\lambda}_1^2 - \bar{\lambda}_1^{-2}) (|\lambda_1|^2 |p_n|^2 - |q_n|^2) \\
	& \quad + i (|\lambda_1|^4 - 1) (\bar{\lambda}_1^{-1} u_n \bar{p}_n q_n - \lambda_1^{-1} \bar{u}_n p_n \bar{q}_n),
	\end{align*}
	\begin{align*}
	\frac{d}{d t} (|p_n|^2 + |\lambda_1|^2 |q_n|^2) &= \frac{i}{2} (\lambda_1^2 + \lambda_1^{-2} -
	\bar{\lambda}_1^2 - \bar{\lambda}_1^{-2}) ( |p_n|^2 - |\lambda_1|^2 |q_n|^2)  \\
	& \quad + i (|\lambda_1|^4 - 1) (\lambda_1^{-1} u_{n-1} \bar{p}_n q_n - \bar{\lambda}_1^{-1} \bar{u}_{n-1} p_n \bar{q}_n), 
	\end{align*}
	and
	\begin{align*}
	\frac{d}{d t} (p_n \bar{q}_n) &= i (\bar{\lambda}_1 u_{n-1} - \bar{\lambda}_1^{-1} u_n)|p_n|^2
	+ i (\lambda_1 u_n - \lambda_1^{-1} u_{n-1})|q_n|^2  \\
	& \quad + \frac{i}{2} (\lambda_1^2 + \lambda_1^{-2} +
	\bar{\lambda}_1^2 +  \bar{\lambda}_1^{-2} + 2 (u_n \bar{u}_{n-1} + \bar{u}_n u_{n-1})) p_n \bar{q}_n).
	\end{align*}
	We have verified with the help of the Wolfram's Mathematica that the two equations for $\dot{a}_n$ and $\dot{b}_n$ are satisfied by using the previous expressions and the definitions of $a_n$ and $b_n$.
\end{proof}

The following two remarks report applications of the $1$-fold DT with the periodic eigenfunctions of the dnoidal and cnoidal waves. As expected from similar applications in \cite{CPnls,CPW}, the $1$-fold DT recovers the same dnoidal and cnoidal waves
after translations in space and complex phase.

\begin{remark}
	\label{remark-dn}
	Let $\{ u_n(t) \}_{n \in \mathbb{Z}}$ be the dnoidal wave in the form \eqref{2.1} and \eqref{2.3}, and $\{ (p_n(t),q_n(t))^\mathrm{T} \}_{n \in \mathbb{Z}}$ be the eigenfunction of the linear system (\ref{lax-intro}) associated with $\lambda = \lambda_1$ in the form (\ref{3.3}) and (\ref{eigenfunction}). 
	Since $U_n$ and $\lambda_1$ are real, the 1-fold DT \eqref{5.1} yields
	\begin{align*}
	\hat{u}_n &= \left[- \frac{|P_n|^2 + \lambda_1^2 |Q_n|^2}{\lambda_1^2 |P_n|^2 + |Q_n|^2} U_n - \frac{ (\lambda_1^3 - \lambda_1^{-1}) P_n \bar{Q}_n}{\lambda_1^2 |P_n|^2 + |Q_n|^2}\right] e^{2i \omega t}  \\
&= - \frac{\sigma_1 \sqrt{F_1}}{U_n} e^{2i \omega t},
	\end{align*}
where we have used (\ref{eigenfunction}) after multiplying the numerator and the denominator by $\bar{P}_n Q_n$. By using (\ref{2.3}) and (\ref{3-1-0}), we obtain
	\begin{align*}
\hat{u}_n &= - \sigma_1 \frac{ {\rm sn}(\alpha; k)}{{\rm cn}(\alpha; k)} \frac{ \sqrt {1-k^2} }{{\rm dn}(\alpha n;k)} e^{2i \omega t} \\
&= - \sigma_1 \frac{ {\rm sn}(\alpha; k)}{{\rm cn}(\alpha; k)} {\rm dn}(\alpha n + K(k);k) e^{2i \omega t} \\
&= - \sigma_1 u_n(\alpha n + K(k)).
\end{align*}	
The new solution $\hat{u}_n$ is a half-period translation of the dnoidal wave $u_n$ with sign flip for $\sigma_1 = +1$ in case of eigenvalue \eqref{3-1-2}. There is no sign flip for $\sigma_1 = -1$ in case of eigenvalue (\ref{3-1-1}).
\end{remark}

\begin{remark}
	\label{remark-cn}
	Let $\{ u_n(t) \}_{n \in \mathbb{Z}}$ be the cnoidal wave
in the form \eqref{2.1} and \eqref{2.6}, and 
$\{ (p_n(t),q_n(t))^\mathrm{T} \}_{n \in \mathbb{Z}}$ be the eigenfunction of the linear system (\ref{lax-intro}) associated with $\lambda = \lambda_1$ in the form (\ref{3.3}) and  (\ref{eigenfunction}). The 1-fold DT \eqref{5.1} yields:
	\begin{align*}
	\hat{u}_n & = \left[- \frac{\lambda_1(|P_n|^2 + |\lambda_1|^2 |Q_n|^2)}{\bar{\lambda}_1(|\lambda_1|^2 |P_n|^2 + |Q_n|^2)} U_n + \frac{\lambda_1 (1 - |\lambda_1|^4) P_n \bar{Q}_n}{\bar{\lambda}_1^2 (|\lambda_1|^2 |P_n|^2 + |Q_n|^2)}\right] e^{2i \omega t}  \\
	& =  \frac{\lambda_1^2 \left[- F_1 (1- |\lambda_1|^4) - 2\bar{\lambda}_1^2 \sqrt{F_1} U_n^2 + \sqrt{F_1} (|\lambda_1|^4 +1) U_n U_{n-1} \right]}
	{\bar{\lambda}_1^2 \left[\sqrt{F_1} (|\lambda_1|^4 +1) U_n - 2 \lambda_1^2 \sqrt{F_1} U_{n-1} +
		(1- |\lambda_1|^4) U_n^2 U_{n-1} \right]} e^{2i \omega t}  \\
	& = \frac{(\bar{\lambda}_1^{-2} + \lambda_1^2)U_n U_{n-1} - 2U_n^2 - \sqrt{F_1} (\bar{\lambda}_1^{-2} - \lambda_1^2) }
	{(\bar{\lambda}_1^2 + \lambda_1^{-2}) U_n - 2 U_{n-1} + (\sqrt{F_1})^{-1} (\lambda_1^{-2} - \bar{\lambda}_1^2)U_n^2 U_{n-1}} e^{2i \omega t},
	\end{align*}
where we have used (\ref{eigenfunction}) after multiplying the numerator and the denominator by $\bar{P}_n Q_n$. Inserting (\ref{2.6}), (\ref{3-2-2-0}), and (\ref{3-2-2}) into this formula yields
	\begin{equation*}
	\hat{u}_n  =  \frac {k \sqrt{1-k^2} {\rm sn}(\alpha;k) ({\rm cn}(\alpha;k) + i \sqrt{1-k^2} {\rm sn}(\alpha;k))} {{\rm dn}(\alpha;k) ({\rm cn}(\alpha;k) - i \sqrt{1-k^2} {\rm sn}(\alpha;k))} \Upsilon_n e^{2i \omega t},
	\end{equation*}
	where
	\begin{align*}
	\Upsilon_n = & \frac {{\rm cn}(\alpha n;k) {\rm cn}(\alpha n - \alpha;k) - {\rm cn}(\alpha;k) {\rm cn}^2 (\alpha n;k) - i \sqrt{1-k^2} {\rm sn}(\alpha;k) {\rm sn}^2(\alpha n;k)} {\sqrt{1-k^2} ({\rm cn}(\alpha n;k) - {\rm cn}(\alpha;k) {\rm cn}(\alpha n - \alpha;k)) - i {\rm sn}(\alpha;k) {\rm cn}(\alpha n - \alpha;k)
		{\rm dn}^2(\alpha n;k)}  \\
	= & \frac{i ({\rm cn}(\alpha;k) - i \sqrt{1-k^2} {\rm sn}(\alpha;k)) {\rm sn}(\alpha n;k)}
	{{\rm dn}(\alpha;k) {\rm dn}(\alpha n;k)}.
	\end{align*}
After simplification, we obtain
	\begin{align*}
	\hat{u}_n &=  \frac {i k \sqrt{1-k^2} {\rm sn}(\alpha;k) ({\rm cn}(\alpha;k) + i \sqrt{1-k^2} {\rm sn}(\alpha;k)) {\rm sn}(\alpha n;k)} {{\rm dn}^2(\alpha;k) {\rm dn}(\alpha n;k)} e^{2i \omega t} \\
	&= - \frac{i k {\rm sn}(\alpha;k) ({\rm cn}(\alpha;k) + i \sqrt{1-k^2} {\rm sn}(\alpha;k))}{ {\rm dn}^2(\alpha;k)} {\rm cn}(\alpha n + K(k);k) e^{2i \omega t} \\
	&= - \frac{i ({\rm cn}(\alpha;k) + i \sqrt{1-k^2} {\rm sn}(\alpha;k))}{ {\rm dn}(\alpha;k) } u_n(\alpha n + K(k)).
	\end{align*}
Since the amplitude factor has the unit modulus, the new solution is a quarter-period translation of the cnoidal wave (\ref{2.6}) with a suitable phase factor.
\end{remark}

Remarks \ref{remark-dn} and \ref{remark-cn} are used to generate the left panels on Figures \ref{fig-3} and \ref{fig-4}. For the right panels, 
we use the 1-fold DT with the nonperiodic solutions computed in 
the form (\ref{eigenfunction-second}) with $\theta_n(t) = \Theta_n + it$, 
where $\Theta_n$ is computed numerically from (\ref{theta-1-1}) and (\ref{theta-2-7}) for the dnoidal and cnoidal waves respectively. 
The integration constant in $\Theta_n$ is chosen 
from the condition $\Theta_0 = 0$ so that  $\theta_0(0) = 0$.

In order to compute the magnification factors (\ref{5.19}) and (\ref{5.23}) for the rogue waves on the dnoidal and cnoidal background respectively, 
we substitute the second solution in the form (\ref{eigenfunction-second}) with $\theta = 0$ (attained at $n = 0$ and $t = 0$)
into the $1$-fold DT in the form \eqref{5.1}. This yields
$$
\hat{u}_n(t) = \hat{U}_n(t) e^{2 i \omega t}
$$
with
\begin{align}
\hat{U}_n & = - \frac{\lambda_1(|\lambda_1|^2 |P_n|^2 +  |Q_n|^2)}{\bar{\lambda}_1( |P_n|^2 + |\lambda_1|^2 |Q_n|^2)} U_n - \frac{\lambda_1 (1 - |\lambda_1|^4) P_n \bar{Q}_n}{\bar{\lambda}_1^2 (|P_n|^2 + |\lambda_1|^2 |Q_n|^2)}.
\label{5.20}
\end{align}
Multiplying the numerator and the denominator in (\ref{5.20}) by $\bar{P}_n Q_n$ and using \eqref{eigenfunction} yields the explicit relation between the new solution $\hat{U}_n$ and the standing wave $U_n$ for $n = 0$ and $t = 0$:
\begin{align*}
\hat{U}_n &= - \frac{\lambda_1}{\bar{\lambda}_1}
\frac{|\lambda_1|^2 (\bar{\lambda}_1 U_n -
	\bar{\lambda}_1^{-1} U_{n-1})(\sigma_1 \sqrt{F_1} - U_n U_{n-1}) + (\lambda_1 U_{n-1}- \lambda_1^{-1} U_n)(\sigma_1 \overline{\sqrt{F_1}} - U_n U_{n-1})}{(\bar{\lambda}_1 U_n -
	\bar{\lambda}_1^{-1} U_{n-1})(\sigma_1 \sqrt{F_1} - U_n U_{n-1}) + |\lambda_1|^2 (\lambda_1 U_{n-1}- \lambda_1^{-1} U_n)(\sigma_1 \overline{\sqrt{F_1}} - U_n U_{n-1})} U_n \\
& \quad + \frac{\lambda_1}{\bar{\lambda}_1^2}
\frac{(|\lambda_1|^4 -1) (\sigma_1 \sqrt{F_1} - U_n U_{n-1}) (\sigma_1 \overline{\sqrt{F_1}} - U_n U_{n-1})}{(\bar{\lambda}_1 U_n -
\bar{\lambda}_1^{-1} U_{n-1})(\sigma_1 \sqrt{F_1} - U_n U_{n-1}) + |\lambda_1|^2 (\lambda_1 U_{n-1}- \lambda_1^{-1} U_n)(\sigma_1 \overline{\sqrt{F_1}} - U_n U_{n-1})}.
\end{align*}
where we have used that $U_n$ is real-valued for both dnoidal and cnoidal waves.

In the case of the dnoidal wave (\ref{2.3}), we use the fact that $\lambda_1 \in \mathbb{R}$ and $F_1 > 0$ to simplify the expression for $\hat{U}_n$ to the form
\begin{equation*}
\hat{U}_n =- \frac{ U_n^2 + U_n U_{n-1} - \sigma_1 \sqrt{F_1}}{U_{n-1}}
\end{equation*}
which together with \eqref{2.3} and \eqref{3-1-0}, and \eqref{3-1-1} yields
\begin{equation*}
|\hat{U}_0(0)| = \frac{{\rm sn}(\alpha;k) (1 + {\rm dn}(\alpha;k) - \sigma_1 \sqrt
	{1-k^2} )}{{\rm cn}(\alpha;k) {\rm dn}(\alpha;k)}.
\end{equation*}
Dividing this formula to $A = {\rm sn}(\alpha;k)/{\rm cn}(\alpha;k)$ yields the magnification factor for the dnoidal wave in the form (\ref{5.19}).

In the case of the cnoidal wave (\ref{2.6}), we use the fact that $\lambda_1 \in \mathbb{C}\backslash \mathbb{R}$, $F_1 < 0$, and $\sigma_1 = 1$ to simplify the expression for $\hat{U}_n$ to the form
\begin{align*}
\hat{U}_0(0) = \frac{k {\rm sn}(\alpha;k) ({\rm dn}(\alpha;k) + 1) 
	({\rm cn}(\alpha;k) + i \sqrt{1-k^2} {\rm sn}(\alpha;k))}{{\rm dn} (\alpha;k) [k^2 {\rm cn}^2(\alpha;k) + (1-k^2)]},
\end{align*}
where we have used \eqref{2.6}, \eqref{3-2-2-0}, and \eqref{3-2-2}. 
This yields
\begin{align*}
|\hat{U}_0(0)|
= \frac{k {\rm sn}(\alpha;k) ({\rm dn}(\alpha;k) +1)}{{\rm dn}^2 (\alpha;k)}.
\end{align*}
Dividing by  $A = k {\rm sn}(\alpha;k)/{\rm dn}(\alpha;k)$ yields the magnification factor for the cnoidal wave in the form (\ref{5.23}).

\vspace{0.25cm}

{\bf Acknowledgements.} This work was supported in part by the National
Natural Science Foundation of China (No. 11971103) and the Project ``333'' of Jiangsu Province.

\end{document}